\documentclass[11pt,a4paper]{article}
\pdfoutput=1
\usepackage{amssymb,amsmath,amsthm,pdfsync}
\usepackage[top=2.5cm,right=3cm,left=3cm,bottom=2.5cm]{geometry}
\usepackage[english]{babel}
\usepackage{yhmath,url}

\linespread{1.2}
\allowdisplaybreaks[4]

% ------ Math commands ---------------------------------------------------

\newtheorem{prop}{Proposition}
\newtheorem{thm}[prop]{Theorem}
\newtheorem{lemma}[prop]{Lemma}
\newtheorem{cor}[prop]{Corollary}

\newtheorem{rem}[prop]{Remark}

\newcommand{\pp}{\mathcal{P}}
\newcommand{\pa}{\mathcal{P}(\A)}
\newcommand{\sa}{\mathcal{S}(\A)}
\newcommand{\A}{\mathcal{A}}
\newcommand{\B}{\mathcal{B}}
\newcommand{\HH}{\mathcal{H}}
\newcommand{\R}{\mathbb{R}}
\newcommand{\C}{\mathbb{C}}
\newcommand{\Z}{\mathbb{Z}}
\newcommand{\de}{\mathrm{d}}
\newcommand{\inner}[1]{\left<\smash[t]{#1}\right>}
\newcommand{\mat}[1]{\bigg[\begin{array}{cc}#1\end{array}\bigg]}
\newcommand{\D}{D\mkern-11.5mu/\,}
\newcommand{\tr}{\mathrm{Tr}}
\newcommand{\norm}[1]{\|#1\|}
\newcommand{\ST}{C\hspace{-1pt}S}
\newcommand{\abs}[1]{|#1|}

\begin{document}
%%% ======================================================================

\title{On Pythagoras theorem for products of spectral triples}

\author{\rule{0pt}{15pt}Francesco D'Andrea$^1$ and Pierre Martinetti$\hspace{1pt}^{2,3}$ \\[15pt]
{\footnotesize $^1$ Dipartimento di Matematica e Applicazioni, Universit\`a di Napoli Federico II, Napoli, Italy.}
 \\[2pt]
{\footnotesize $^2$ Dipartimento di Matematica e CMTP, Universit\`a di Roma Tor Vergata, Roma, Italy.}
 \\[2pt]
{\footnotesize $^2$ Dipartimento di Fisica, Universit\`a di Roma Sapienza, Roma, Italy.}}

\date{}

\maketitle

{\renewcommand{\thefootnote}{}
\footnotetext{%
\hspace*{-6pt}\textit{MSC-class 2010:} 58B34 (Primary), 46L87 (Secondary). \\
\hspace*{10pt}\textit{Keywords:} Noncommutative geometry, spectral triples, spectral distance, Pythagoras theorem. \\
\hspace*{10pt}\textit{Acknowledgments.}~
\noindent P.M.~is supported by the ERG-Marie Curie fellowship 237927 ``NCG and quantum
gravity'' and the ERC Advanced Grant 227458 OACFT ``Operator Algebras and Conformal Field Theory''.}}

\vspace{-5pt}

\begin{abstract} We discuss a version of Pythagoras theorem in
noncommutative geometry. Usual
Pythagoras theorem can be formulated in terms of Connes' distance,  between pure states,  in the product
of commutative spectral triples. We investigate the
gene\-ralization to both non pure states and arbitrary spectral
triples. We show that Pythagoras theorem is replaced by
some Pythagoras inequalities, that we prove for the product of arbitrary
(i.e. non-necessarily commutative) spectral triples, assu\-ming only
some unitality condition. We show that these inequalities are optimal, and
provide non-unital counter-examples inspired by K-homology.
\end{abstract}

\smallskip

\section{Introduction}\label{sec:1}

\noindent Given the natural spectral triple $(\A, \HH, D)$ associated to a
complete Riemannian spin manifold $M$ without boundary, that is 
\begin{equation}
\A = C^\infty_0(M), \qquad\HH = L_2(M,S),\qquad D = \D,
\label{eq:23}
\end{equation}
the spectral distance $d_{\A,D}$ of Connes (see \S\ref{sec:3.1})
on the state space $\sa$ of $\A$ coincides
with the Wasserstein distance $W$ of order $1$ in the theory of
optimal transport with cost
function the geodesic distance $d_{\text{geo}}$. Namely, given two
probability measures $\mu, \mu'$ on $M$ viewed as states $\varphi,
\varphi'$ of $\A$, that is
$$
   \varphi(f) = \int_M f(x) d\mu \quad \forall f\in \A
$$
and similarily for $\varphi', \mu'$,  one has
{\begin{equation}
  \label{eq:21}
  d_{\A,D}(\varphi, \varphi') = W(\mu, \mu') \quad \forall \varphi, \varphi'\in\sa.
\end{equation}
On the space of pure states $\pa\simeq M$, that by Gelfand duality are Dirac's delta distributions
$
\delta_x(f) := f(x) \;\forall\; x\in M, f\in C(M),
$
the spectral/Wasserstein distance gives back the cost function
\begin{equation}
  \label{eq:14}
  d_{\A,D}(x,y) = W(\delta_x, \delta_y) =d_{\text{geo}}(x,y).
\end{equation}

Consider now the product of two manifolds $M_1$,
$M_2$ equipped with the product metric. Pythagoras theorem states that
\begin{equation}
  d_{\text{geo}}(x,x') = \sqrt{d_{\text{geo}}({x_1}, {x'_1})^2 + d_{\text{geo}}({x_2}, {x'_2})^2},
\label{eq:3}
\end{equation}
for any couple of points $x=(x_1, x_2)$ and $x'=(x'_1, x'_2)\in
M_1\times M_2$. Denoting $(\A, \HH, D)$ the product of
the spectral triples of $M_1$
and $M_2$, eq.~(\ref{eq:3}) can be equivalently formulated in terms
of spectral distances as
\begin{equation}
  \label{eq:15}
  d_{\A,D}(\delta_{x_1}\otimes \delta_{x_2}, \; \delta_{x'_1}\otimes
  \delta_{x'_2}) = \sqrt{d_{\A_1,D_1}(\delta_{x_1}, \delta_{x'_1})^2 + d_{\A_2,D_2}(\delta_{x_2}, \delta_{x'_2})^2}
\end{equation}
for any pairs of pure states
$\delta_{x_1}\otimes\delta_{x_2},
\delta_{x'_1}\otimes \delta_{x'_2}$ in $\pa$. 
In other terms, the
product of two manifolds (in the sense of spectral triple)  is
ortho\-gonal (in the sense of Pythagoras
theorem restricted to the pure state space). 

It is known for many
years \cite{Martinetti:2002ij}  that a similar result
 holds in the discrete case, that is for the product of a manifold by $\C^2$, as well as for the
 product of a manifold by the finite dimensional algebra
 $\C\oplus\mathbb{H}\oplus M_3(\C)$ describing the internal degrees of freedom of the
standard model of elementary particles
\cite{Chamseddine:2007oz}. Furthermore,  in
the last case Pythagoras theorem yields a metric interpretation of the Higgs field. Recently, it
comes out in \cite{MT11} that eq. (\ref{eq:15}) for
is also true
for the product of the Moyal plane by $\C^2$, but only between translated
states, that is for $\delta_{x'_1}, \delta_{x'_2}$ the two pure states
of $\C^2$, $\delta_{x_1}=\varphi$ any state of the Moyal
algebra and $\delta_{x_2}=\varphi\circ\tau_\kappa$ with $\tau_\kappa,
\kappa\in\R^2,$ the translation action of $\R^2$ on the Moyal
plane. For arbitrary two states of the Moyal algebra, it is not known whether Pythagoras
equality is still valid: a crucial tool of the proof that is missing
in the general case is the existence of a
geodesic between the states under consideration (as the Riemannian
geodesic on the manifold, and the orbit of the translation group on the Moyal plane).

In this paper, we
investigate the generalization of Pythagoras theorem to both non-pure
states and the product $(\A, \HH, D)$ of arbitrary spectral triples
$(\A_i, \HH_i, D_i)$, $i=1,2$. We only
impose two limitations: separable states, that is
$$\sa\ni \varphi :=\varphi_1\otimes\varphi_2,\qquad \varphi_i\in
{\mathcal S}(\A_i)$$ and
similarly for $\varphi'$, and unital spectral
triples. The restriction to separable states is natu\-ral with respect to the
commutative case, and is also discussed on some physical ground in
\cite[\S 2.2]{MMT11}. The restriction to unital spectral triples emerges from the computation, and is discussed in the last part of this
paper. 

Specifically, we show that the following
Pythagoras inequalities hold true on separable (non-necessarily
pure) states:\vspace{-0pt}
\begin{subequations}\label{eq:Pineq}
\begin{align}\label{eq:ineqSDa}
d_{\A,D}(\varphi,\varphi') &\geq
\sqrt{d_{\A_1,D_1}(\varphi_1,\varphi'_1)^2+d_{\A_2,D_2}(\varphi_2,\varphi'_2)^2} \;,\\[4pt]
d_{\A,D}(\varphi,\varphi') &\leq
\sqrt{2}\sqrt{d_{\A_1,D_1}(\varphi_1,\varphi'_1)^2+d_{\A_2,D_2}(\varphi_2,\varphi'_2)^2} \;.
\label{eq:ineqSDb}
\end{align}
\end{subequations}
In the non-unital case, only \eqref{eq:ineqSDb} holds true. 
As a corollary, one gets a Pythagoras ine\-quality  for the Wasserstein distance between
separable states of a product of manifolds:
\begin{equation}\label{eq:Wineq}
\sqrt{W_1(\mu_1,\mu_1')^2+W_2(\mu_2,\mu_2')^2}\leq
W(\mu,\mu')\leq \sqrt{2}\sqrt{W_1(\mu_1,\mu_1')^2+W_2(\mu_2,\mu_2')^2}.
\end{equation}
Moreover we show that both equations \eqref{eq:Pineq} and \eqref{eq:Wineq} are
optimal, i.e.~that the coefficient in \eqref{eq:ineqSDb} and on the r.h.s.~of
\eqref{eq:Wineq} cannot be less than $\sqrt{2}$, by providing examples where this bound
is actually saturated.

It is remarkable that ``something'' of Pythagoras theorem
survives in the most general case. This was not granted at all from
the beginning, especially having in mind the well known  ``inverse Pythagoras 
relation'' satisfied by the Dirac operator in the product of spectral
triples (see \eqref{eq:19} in the
conclusion). The later seems to indicate that Pythagoras equality for
the spectral distance  may be retrieved only in some very particular cases, like the product of a
manifold or the Moyal plane by $\C^2$,  when this inverse relation can
be inverted. It is rather unexpected that inequalities
(\ref{eq:Pineq}) holds true in the general case. 

Also, from the point of view of the Wasserstein distance and as far as we can judge from a limited knowledge of the 
literature, it did not seem so well noticed that Pythagoras theorem
does not hold for non pure states.

The paper is organized as follows. In \S\ref{sec:3.0} we
recall some basic definitions. In \S\ref{sec:2} we discuss Pythagoras theorem
for the product of two manifolds. We start with the geodesic distance (between
pure states): for the sake of completeness, we provide a
proof using differential geometry in \S\ref{sec:2.1}, and explain the relation
with noncommutative geometry and products of spectral triples in
\S\ref{sec:diffNCG}, proving eq.\eqref{eq:15}.
Then, in \S\ref{sec:2.2} we pass to non necessarily pure states and the Wasserstein distance,
and show with a simple counterexample that Pythagoras theorem does not hold, and must be
replaced by the inequality \eqref{eq:Wineq}; we prove that the latter is optimal. In \S\ref{sec:3.2} we consider the generalization to arbitrary spectral triples,
and prove the inequalities \eqref{eq:Pineq}: the upper bound for $d_{\A,D}(\varphi,\varphi')$
holds for arbitrary spectral triples, while the lower bound can be proved only for a product
of two \emph{unital} spectral triples. In \S\ref{sec:deg} we provide two simple examples of
non-unital spectral triples violating the above-mentioned lower bound, after a short digression
in \S\ref{sec:int} to explain their importance in $K$-homology.

\section{Spectral and Wasserstein distances}\label{sec:3.0}
%%% ======================================================================

We use the following notations/conventions.
$\B(\HH)$ is the algebra of all bounded linear operators on a Hilbert
space $\HH$.
By a $*$-algebra $\A$ we shall
always mean an associative involutive $\C$-algebra.
A $*$-representation $\pi:\A\to\B(\HH)$ is called \emph{non-degenerate} if $\{\pi(a)\psi\}_{a\in\A,\psi\in\HH}$ span a dense
subspace of $\HH$; when  $\A$ is unital, with unit element $e$, the representation $\pi$ is called \emph{unital} if
$\pi(e)=1\in\B(\HH)$ is the identity operator.

Note that a representation of a unital $*$-algebra is non-degenerate if and only if
it is unital. Indeed, since $e$ is a projection, $\pi(e)$ is a projection, and from
$\pi(a)=\pi(a)\pi(e)\;\forall\;a\in\A$ it follows that $\pi(\A)\HH$ coincides with
the range of $\pi(e)$; as a corollary, $\pi$ is non-degenerate if and only if $\ker\pi(e)=\{0\}$,
i.e.~if and only if $\pi(e)=1$.

When $\pi$ is a faithful representation, and there is no risk of ambiguity, 
we will identify $\A$ with $\pi(\A)$ and omit the representation symbol; if
$\exists\;e\in\A$ but $\pi(e)\neq 1$, we identify $\A$ with $\pi(\A)$ and think
of it as a non-unital subalgebra of $\B(\HH)$.

When we talk about \emph{states} of $\A$ we always mean states of the $C^*$-algebra
$\bar{\A}$, closure of $\pi(\A)$; the set of all states is denoted by $S(\A)$.
We denote by $\|\,.\,\|_{\B(\HH)}$ the operator
norm of $\B(\HH)$, by $\|v\|_{\HH}^2=\inner{v,v}$ the norm of a
vector $v\in\HH$, and use the notation $\inner{\,,\,}$ for the inner product,
regardeless of the Hilbert space we are considering.

\subsection{Basics on noncommutative spaces}\label{sec:3.1}
The core of noncommutative differential geometry is the notion of spectral triple \cite{Con94,Con95},
also known as \emph{unbounded Fredholm module} (see e.g.~\cite{Con85})
or \emph{K-cycle} (see e.g.~\cite{Con89}).

Recall that a \emph{spectral triple} $(\A,\HH,D)$ is the datum of:
i) a separable complex Hilbert space $\HH$,
ii) a $*$-algebra $\A$ with a faithful $*$-representation
$\pi:\A\to\B(\HH)$ (the representation symbol is usually omitted),
iii) a (not-necessarily bounded) self-adjoint operator $D$ on $\HH$
such that $[D,a]$ is bounded and $a(1+D^2)^{-1/2}$ is compact, for all $a\in\A$.
The spectral triple is \emph{unital} if $\A$ is a unital algebra and $\pi$ a unital
representation. In the last sections \S\ref{sec:int} and \S\ref{sec:4} we
will consider an example where the algebra is unital but the representation is not:
this will be regarded as a non-unital spectral triple. Note that
non-unital representations of unital algebras are of fundamental importance
in K-homology (see \S\ref{sec:int}).

A spectral triple is \emph{even} if there is a \emph{grading} $\gamma$ on $\HH$, i.e.~a bounded operator satisfying $\gamma=\gamma^*$ and $\gamma^2=1$,
commuting with any $a\in\A$ and anticommuting with $D$.

A commutative example is given by $(C^\infty_0(M),L^2(M,S),\D)$,
where $C_0^\infty(M)$ is the algebra of complex-valued smooth functions vanishing
at infinity on a Riemannian spin$^c$ manifold with no boundary, $L^2(M,S)$ is the
Hilbert space of square integrable spinors and $\D$ is the Dirac operator.
This spectral triple is even if $M$ is even dimensional.

The set $S(\A)$ is an extended metric space\footnote{%
An extended metric space is a pair $(X,d)$ where $X$ is a set
and $d:X\times X\to[0,\infty]$ a symmetric map satisfying the
triangle inequality and such that $d(x,y)=0$ if{}f $x=y$.
The only difference with an ordinary metric space is that the
value $+\infty$ for the distance is allowed.}, with distance given by
$$
d_{\A,D}(\varphi,\varphi')=\sup_{a=a^*\in\A}\big\{\varphi(a)-\varphi'(a)\,:\,
\|[D,a]\|_{\B(\HH)}\leq 1 \big\}
$$
for all $\varphi,\varphi'\in S(\A)$. This is usually called \emph{Connes' metric} or \emph{spectral distance}.
When there is no risk of ambiguity, this distance will be denoted
simply by $d$. It has been introduced in \cite{Con89}, with the
supremum on the whole algebra $\A$. It is routine to show that the
supremum can be equivalently searched on selfadjoint elements \cite{Iochum:2001fv}.

\medskip

Rieffel first noticed in \cite{Rieffel:1999ec} that the spectral distance
associated to \eqref{eq:23} for compact $M$ coincides with the Wasserstein distance of order $1$ between two
probability measures $\mu_1, \mu_2$ on $M$ (with cost given by
the geodesic distance $d_{\text{geo}}$) given by \cite{Vil08}:
$$
  W(\varphi_1, \varphi_2) := \sup_{\norm{f}_{\mathrm{Lip}}\leq 1,
    \,f\in L ^1(\mu_1)\cap L ^1(\mu_2) }\left(\int_{\cal X} f\de\mu_1 -
    \int_{\mathcal{X}} f\de\mu_2\right),
$$
where the supremum is on all real $\mu_{i=1,2}$-integrable functions $f$ that are $1$-Lipschitz,
i.e.~such that $\abs{f(x) -f(y)} \leq d_{\text{geo}}(x,y) \;\forall\; x,y \in M$.
This result remains true for
locally compact manifold providing one assumes geodesic completeness
(see \S2.2 of \cite{DM09}).

\subsection{Products of spectral triples}\label{sec:3}

In noncommutative geometry, the Cartesian product of spaces is replaced by the product
of spectral triples. Given two spectral triples
$(\A_1,\HH_1,D_1,\gamma_1)$ and $(\A_2,\HH_2,D_2)$ such that the first one is even, their product $(\A,\HH,D)$ is defined as
$$
\A=\A_1\otimes\A_2 \;,\qquad
\HH=\HH_1\otimes\HH_2 \;,\qquad
D=D_1\otimes 1+\gamma_1\otimes D_2 \;.
$$
Here the tensor product between algebras is the algebraic
tensor product. 

For simplicity of notations we will only consider the case when
at least one of the two spectral triples is even, but one can define the product
of two odd spectral triples as well (see e.g.~\cite{Vanhecke:1999uq,dabrodoss}),
and all our results can be extended to this case.

Recall that a state $\varphi:\A\to\C$ is called \emph{separable} if it is of the form
$
\varphi=\varphi_1\otimes\varphi_2 ,
$
with $\varphi_i$ a state on $\A_i$ for $i=1,2$.
When at least one of the $\A_i$ is commutative, all pure states are
separable \cite{Kadison1986}, that is
$
\pa = \pp (\A_1) \times \pp(\A_2) .
$

%%% ======================================================================
\section{Products of manifolds}\label{sec:2}

In this section, we first recall how to retrieve Pythagoras theorem
for a product $M = M_1\times M_2$ {of manifolds, and interpret this
easy result of differential geometry within the spectral
distance framework. Then we investigate the Wasserstein distance,
showing by examples that one cannot hope to prove inequalities stronger
than \eqref{eq:Wineq}.

\subsection{Pythagoras theorem: the differential geometry way}\label{sec:2.1}

Let  $(M_1, g_1)$, $(M_2,g_2)$ be two connected complete
Riemannian manifolds of dimension $m_1, m_2$, and let $M$ denote their
product $M_1\times M_2$ equipped with the product metric:
$$
g:=\left(\begin{array}{cc}  g_1& 0 \\ 0 &  g_2\end{array}\right).
$$
The line element
$\de s$ of $M$ is given by
$
\de s^2=\de s_1^2+\de s_2^2,
$
with $\de s_i$ the line element of $M_i$, $i=1,2$. 
This  infinitesimal version of Pythagoras theorem can be integrated
in order to obtain Pythagoras equality.

\begin{prop}\label{proppyth}
For any $x=(x_1, x_2), x'=(x'_1, x'_2) \in M$,
\begin{equation}
  \label{eq:4}
  d(x,x')^2 = d_1(x_1, x'_1)^2 + d_2(x_2, x'_2)^2
\end{equation}
where $d, d_i$ denote the geodesic distance on $M, M_i$ respectively, $i=1,2$. 
\end{prop}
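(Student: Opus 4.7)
The plan is to prove the two inequalities $d(x,x')^2 \leq d_1(x_1,x'_1)^2 + d_2(x_2,x'_2)^2$ and $d(x,x')^2 \geq d_1(x_1,x'_1)^2 + d_2(x_2,x'_2)^2$ separately, since together they yield \eqref{eq:4}. The key input is that any smooth path in $M$ is of the form $\gamma(t)=(\gamma_1(t),\gamma_2(t))$ with $\gamma_i:[0,1]\to M_i$, and the product metric gives $|\dot\gamma(t)|_g^2 = |\dot\gamma_1(t)|_{g_1}^2 + |\dot\gamma_2(t)|_{g_2}^2$, so that the length functional decomposes as
\[
L(\gamma) = \int_0^1 \sqrt{|\dot\gamma_1(t)|_{g_1}^2 + |\dot\gamma_2(t)|_{g_2}^2}\,\de t.
\]

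For the upper bound, I would pick minimizing geodesics $\gamma_i$ in $M_i$ from $x_i$ to $x'_i$, parametrized on $[0,1]$ with constant speed $|\dot\gamma_i|_{g_i} \equiv d_i(x_i,x'_i)$ (here geodesic completeness of each $M_i$ ensures such minimizers exist, or else one can work with approximate minimizers and take a limit). Then the product curve $t \mapsto (\gamma_1(t),\gamma_2(t))$ is an admissible path in $M$ from $x$ to $x'$ whose length is exactly $\sqrt{d_1(x_1,x'_1)^2 + d_2(x_2,x'_2)^2}$, giving the inequality $d(x,x') \leq \sqrt{d_1^2+d_2^2}$.

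For the lower bound, which is the less obvious half, I would take any piecewise smooth path $\gamma(t)=(\gamma_1(t),\gamma_2(t))$ from $x$ to $x'$ in $M$ and apply Minkowski's integral inequality to the $\R^2$-valued map $t\mapsto \bigl(|\dot\gamma_1(t)|_{g_1},|\dot\gamma_2(t)|_{g_2}\bigr)$:
\[
L(\gamma) = \int_0^1 \sqrt{|\dot\gamma_1|_{g_1}^2 + |\dot\gamma_2|_{g_2}^2}\,\de t \;\geq\; \sqrt{\left(\int_0^1 |\dot\gamma_1|_{g_1}\de t\right)^2 + \left(\int_0^1 |\dot\gamma_2|_{g_2}\de t\right)^2}.
\]
Since $\int_0^1 |\dot\gamma_i|_{g_i}\de t = L(\gamma_i) \geq d_i(x_i,x'_i)$ (as $\gamma_i$ joins $x_i$ to $x'_i$ in $M_i$), this lower bounds $L(\gamma)$ by $\sqrt{d_1(x_1,x'_1)^2 + d_2(x_2,x'_2)^2}$. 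Taking the infimum over all such $\gamma$ yields $d(x,x') \geq \sqrt{d_1^2 + d_2^2}$.

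The main obstacle is just keeping track of the right inequality in the lower bound: it is tempting to try Cauchy--Schwarz, which goes the wrong way, whereas Minkowski applied to the \emph{Euclidean} norm of the pair of $L^1$ functions is what is needed. Existence of minimizing geodesics can be invoked via the Hopf--Rinow theorem applied to each factor (both $M_i$ are assumed complete), but for the lower bound completeness is not even required since one works with an arbitrary competitor $\gamma$ and the definition of $d_i$ as an infimum of lengths.
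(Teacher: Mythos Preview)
Your proof is correct and takes a genuinely different route from the paper's. The paper argues via the geodesic equation: it computes the Christoffel symbols of the product metric, shows that the geodesic equation on $M$ decouples into the two geodesic equations on $M_1$ and $M_2$, and then verifies that arc length on $M$ is an affine parameter for each projection, so that a geodesic in $M$ projects to geodesics in the factors with $l^2 = l_1^2 + l_2^2$. Your argument is purely variational: the upper bound comes from exhibiting the product of constant-speed minimizers as a competitor, and the lower bound from Minkowski's integral inequality applied to $t\mapsto(|\dot\gamma_1(t)|_{g_1},|\dot\gamma_2(t)|_{g_2})$. Your approach is more elementary (no Christoffel symbols, no parallel transport) and arguably cleaner for the stated purpose, since both inequalities are made explicit; the paper's approach, on the other hand, yields the additional geometric fact that geodesics in $M$ project to geodesics in the factors, which your argument does not see. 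Your remark that completeness is only needed for the upper bound (and can be bypassed by approximate minimizers) is also a nice observation that the paper's geodesic-equation approach obscures.
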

\begin{proof}
Given a geodesic $c(s) = (c_1(s), c_2(s))$ between $x $ and $x'$ in
$M$ parametrized by its proper length $s$, we first show that the
projections $c_i(s)$ on $M_i$ satisfy the equation of the geodesics,
then that $s$ is an affine parameter for both curves $c_i$. Pythagoras
theorem follows immediately.
Let us compute the Christoffel symbol of $M$,
$$
  \Gamma_{ab}^c = \frac 12 g^{cd} \left( \partial_a g_{db}
    +  \partial_b g_{da} - \partial_d g_{ab}\right)
$$
where $g_{ab}$, $a,b\in[1, m_1+ m_2]$, denote the components
of the metric $g$ of $M$. Writing $g_{\mu\nu}, \mu,\nu\in\left[1,
  m_1\right]$ and $g_{\mu'\nu'}$, $\mu',\nu'\in\left[m_1+ 1,
  m_1 + m_2\right]$ the components of the metrics $g_1, g_2$ of $M_1, M_2$, one has 
$$
 g_{\mu\mu'} = g_{\mu'\mu}  = 0, \quad  \partial_\mu g_{\mu'\nu'}
 = \partial_{\mu'} g_{\mu\nu} = 0 \quad \forall\mu,\mu', \nu, \nu'
$$
so that for any $c\in[1,m_1 + m_2]$,
$
  \Gamma_{\mu\mu'}^c =\Gamma_{\mu'\mu}^c = \Gamma_{\mu\nu}^{\mu'}=
  \Gamma_{\mu'\nu'}^\mu = 0.
$
Therefore the geodesic equation on $M$:
$$
  \frac{d^2 c^c}{ds^2} + \Gamma_{ab}^c \frac{d c^a}{ds}
  \frac{dc^b}{ds} = 0
$$
separates into two equations on $M_1,M_2$:
\begin{equation}
\label{eqgeo}
  \frac{d^2 c_1^\alpha}{ds^2} + \Gamma_{\mu\nu}^{\alpha} \frac{d
    c_1^{\mu}}{ds} \frac{d c_1^\nu}{ds} = 0,\quad \quad
\frac{d^2 c_2^{\alpha'}}{ds^2} + \Gamma_{\mu'\nu'}^{\alpha'} \frac{d
  c_2^{\mu'}}{ds} \frac{d c_2^{\nu'}}{ds} = 0.
\end{equation}

Before claiming that $c_i$'s are geodesic curves in the $M_i$'s, one has
to check that $s$ is an affine parameter for both curves. To fix the
notations, we show it
for $M_1$, the proof for $M_2$ being similar. 
Let $s_1$ denote the proper length of the curve $c_1$ in $M_1$. Its length $l_1$ is
$$
\int_{0}^{l_i}\Big\|\frac{d}{ds_1} c_1\Big\| \,\de s_1.
$$
Under the change of parametrization $t\to s_1$, equation
\eqref{eqgeo} becomes
\begin{equation}
  \label{eq:5}
  \frac{d^2 c_1^\alpha}{ds_1^2} + \Gamma_{\mu\nu}^{\alpha} \frac{d
    c_1^{\mu}}{ds_1} \frac{d c_1^\nu}{ds_1} = - \frac{dc_1^\mu}{ds_1}
  \frac{d^2s_1}{ds^2}.
\end{equation}
The vector
  $\dot c_1 := \frac
  {dc_1^\mu}{ds_1}\partial_\mu$, tangent to $c_1$, has contant norm
  $\sqrt{g_1(\dot c_1 , \dot c_1)} = 1$. Hence, using that the metric
  is parallel with respect to the covariant derivative $\nabla_{\dot c_1}$
(of the Leci-Civita connection) along $\dot c_1$, that is
$$
\frac{d}{ds_1} g_1(\dot c_1 , \dot c_1) = 2 g_1(\nabla_{c_1} \dot c_1
,\dot c_1) = 0,
$$
one obtains from
\eqref{eq:5} --- whose l.h.s.~is nothing but $\nabla_{\dot c_1} \dot
c_1$ --- that:
$$
0= 2 g_1(\nabla_{c_1} \dot c_1
,\dot c_1) = -2 \frac{d^2s_1}{ds^2}g_1(\dot c_1, \dot c_1) = -2 \frac{d^2s_1}{ds^2}.
$$
Hence $s_1 = a_1s +b_1$ for some constants $a_1, b_1$. Similarly  $s_2= a_2 s + b_2$.
This means that $s$ is
an affine parameter of both curves $c_i$, so that the latter are
geodesics of $M_i$. 

One can parametrize any of the
curves $c, c_i$ by any of the parameters $s_i, s$. In particular,
using
$$
\frac{ds_1}{ds} = a_1, \quad \frac{ds_2}{ds} = a_2 \quad \text{ so that }\quad
\frac{ds_2}{ds_1} = \frac{a_2}{a_1},
$$
the length $l_2$ of $c_2$ can be written as
$$
  l_2 = \int_{0}^{l_1} \frac{ds_2}{ds_1} \de s_1 =
  \frac{a_2}{a_1}\int_{0}^{l_1} \de s_1 = \frac{l_1 a_2}{a_1}.
$$
As well, the length $l$ of $c$ is
\begin{align*}
  \int_0^{l_1} ds &= \int\left.\sqrt{1 +
      \left(\frac{ds_2}{ds_1}\right)^2}\right. ds_1 =
    \sqrt{1 + \left(\frac{a_1}{a_2}\right)^2}\int_0^{l_1} ds_1
    = l_1 \sqrt{1 + \left(\frac{a_1}{a_2}\right)^2} = \sqrt{l_1^2 + l_2^2},
\end{align*}
which is nothing but \eqref{eq:4}.
\end{proof}

\subsection{Pythagoras theorem: the noncommutative geometry way}\label{sec:diffNCG}

Besides the natural spectral triple (\ref{eq:23}), given an orientable
Riemannian manifold $M$ without boundary, one can define an even
spectral triple:
\begin{equation}
  \label{eq:17}
  \A=C^\infty_0(M), \qquad \HH=\Omega^\bullet(M), \qquad
  D=\de+\de^*,
\end{equation}
with $\HH$ the Hilbert space of square integrable differential
forms and $D$ the Hodge-Dirac operator (self-adjoint on
a suitable domain).
The grading $\gamma\omega:=(-1)^k\omega$ on \mbox{$k$-form} is
extended by linearity on $\HH$.
This spectral triple is even, even if $M$ is odd-dimensional.
We will refer to this as the ``canonical spectral triple'' of $M$,
and denote it by $\ST(M)$.

If $M$ is even-dimensional, there are in fact two possible $\Z_2$-gradings on $\HH$,
both anticommuting with $D$: one is the grading $\gamma$ above, and the other is the
Hodge star operator (whose square is $1$ using the phase convention of
\cite{GVF01}). Therefore, one has two
spectral triples that differ only in the grading (thus giving the same distance): in the
former case, one usually calls $D$ the \emph{Hodge-Dirac operator} and the corresponding
differential complex is the de Rham complex; in the latter case, the operator $D$ is usually
called the \emph{signature operator} and the corresponding differential complex is called the
signature complex. The signature operator is the one used in Connes reconstruction formula \cite{Con08},
and it is the one interesting in K-homology and index theory \cite{Gil96}. On the other hand,
the de Rham complex is the one which is multiplicative under products, as explained in \S3.1 of \cite{Gil96}
and recalled below.
Working with the Hodge-Dirac operator has the advantage (besides the fact that we don't need
a spin structure), that one can use the product of even-even spectral triples, even if both
the manifolds $M_1$ and $M_2$ are odd-dimensional.

Let now $M=M_1\times M_2$ be the product of two orientable
Riemannian manifolds $M_1, M_2$ (with product
metric). Identifying $C_0(M)$ with the spatial tensor product $C_0(M_1)\bar\otimes C_0(M_2)$ \cite[App.~T]{WO93},
all pure states of $C_0(M)$ are separable:
$\delta_x=\delta_{x_1}\otimes\delta_{x_2}\; \forall x=(x_1,x_2)\in M$.

\begin{prop}
  \label{eq:18}
  The spectral distance associated to the spectral triple $\ST(M)$ coincides with the spectral
  distance associated to the product of the canonical spectral triples
  of $M_1$ and $M_2$, that we denote by $\ST(M_1)\otimes\ST(M_2)$.
\end{prop}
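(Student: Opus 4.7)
My plan is to reduce the statement to the fact that the two spectral triples $\ST(M)$ and $\ST(M_1)\otimes\ST(M_2)$ differ only by which dense $*$-subalgebra of $C_0(M)\cong C_0(M_1)\bar{\otimes}C_0(M_2)$ one uses inside a common Hilbert-space/operator datum. I would first exhibit the canonical unitary
\[
L^2(\Omega^{\bullet}(M))\;\cong\;L^2(\Omega^{\bullet}(M_1))\otimes L^2(\Omega^{\bullet}(M_2))
\]
coming from pullback along the two projections (legitimate thanks to the product metric). Under this isomorphism the grading $\gamma$ by form degree on $M$ corresponds to $\gamma_1\otimes\gamma_2$, and the graded Leibniz rule on product forms gives $\de_M=\de_1\otimes 1+\gamma_1\otimes\de_2$. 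Taking formal adjoints and summing then yields $D=\de_M+\de_M^*=D_1\otimes 1+\gamma_1\otimes D_2$, which is exactly the product Dirac operator.

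Once Hilbert space and operator are matched, the two spectral triples differ only in their algebra: $C^\infty_0(M)$ for $\ST(M)$, versus the algebraic tensor product $C^\infty_0(M_1)\otimes C^\infty_0(M_2)$ for $\ST(M_1)\otimes\ST(M_2)$; the latter is a $*$-subalgebra of the former and acts by the same multiplication operators. Both have $C^*$-closure $C_0(M)$, so the state spaces coincide. Since the supremum defining $d_{\A,D}$ for $\ST(M)$ ranges over a bigger set than for the product, the easy inequality
\[
d^{\ST(M_1)\otimes\ST(M_2)}(\varphi,\varphi')\;\leq\;d^{\ST(M)}(\varphi,\varphi')
\]
is immediate.

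The main obstacle is the reverse inequality. Given $a\in C^\infty_0(M)$ self-adjoint with $\norm{[D,a]}_{\B(\HH)}\leq 1$ and two states $\varphi,\varphi'$, one must produce approximants $a_n$ lying in $C^\infty_0(M_1)\otimes C^\infty_0(M_2)$ such that $(\varphi-\varphi')(a_n)\to(\varphi-\varphi')(a)$ while $\limsup_n\norm{[D,a_n]}_{\B(\HH)}\leq 1$. I would exploit the well-known identification, for the Hodge--Dirac operator, of $\norm{[D,a]}_{\B(\HH)}$ with the Lipschitz seminorm of $a$ for the product geodesic distance, and then approximate $a$ by finite sums $\sum_i f_i\otimes g_i$ via convolution with a product mollifier combined with a compactly supported product cut-off. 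The delicate points are (i) keeping the Lipschitz constant of $a_n$ under control so that the bound $1$ survives in the limit, and (ii) choosing the cut-off large enough that $\varphi(a_n)$ and $\varphi'(a_n)$ converge to $\varphi(a)$ and $\varphi'(a)$ respectively; this is precisely where non-compactness of the factors forces some care. Together these steps give the opposite inequality and hence the claimed equality of spectral distances.
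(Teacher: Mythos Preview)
Your structural outline---identify the Hilbert spaces, match the Hodge--Dirac operator with the product Dirac via the graded Leibniz rule, note that the two algebras are nested with the same $C^*$-closure $C_0(M)$, get the easy inequality from the inclusion, and then run an approximation argument for the reverse---is exactly the skeleton of the paper's proof.

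The substantive difference is in how the reverse inequality is obtained. The paper does not invoke the Lipschitz identification or any explicit mollification. Instead it appeals to the identification
\[
C_0^\infty(M_1\times M_2)\;\simeq\;C_0^\infty(M_1)\,\hat\otimes\,C_0^\infty(M_2)
\]
as a projective tensor product of Fr\'echet spaces: by definition, the algebraic tensor product $C_0^\infty(M_1)\otimes C_0^\infty(M_2)$ is then dense in the topology of uniform convergence of all derivatives. Hence any $f\in C_0^\infty(M)$ admits approximants $f_n\in\A_1\otimes\A_2$ with $f_n\to f$ in sup norm (so $\varphi(f_n)\to\varphi(f)$ for every state) and simultaneously $[D,f_n]\to[D,f]$ in operator norm (since $[D,\cdot]$ only sees first derivatives). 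The equality of distances follows immediately, after the obvious rescaling of $f_n$ by $\max(1,\|[D,f_n]\|)$. This is shorter, works uniformly on any orientable Riemannian manifold, and avoids the Lipschitz detour entirely.

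Your proposed mechanism, by contrast, has a genuine gap: convolution with a product mollifier $\rho_1\otimes\rho_2$ and multiplication by a product cut-off $\chi_1\otimes\chi_2$ do \emph{not} carry a generic $a\in C_0^\infty(M)$ into the algebraic tensor product $C_0^\infty(M_1)\otimes C_0^\infty(M_2)$. Concretely,
\[
\big(a*(\rho_1\otimes\rho_2)\big)(x_1,x_2)=\int a(y_1,y_2)\,\rho_1(x_1-y_1)\,\rho_2(x_2-y_2)\,\de y
\]
is still a general smooth function of $(x_1,x_2)$, not a finite sum $\sum_i f_i\otimes g_i$, unless $a$ already had that form; likewise for $(\chi_1\otimes\chi_2)\cdot a$. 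To actually land in the algebraic tensor product you need a further step---e.g.\ a Fourier truncation after restricting to a product box, or precisely the projective-tensor-product density that the paper invokes---and that missing step, not the Lipschitz control you flag as ``delicate'', is where the real content lies. Without it your argument never produces the approximants $a_n$ it needs.
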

\begin{proof}
If $\omega_1$ resp.~$\omega_2$ is a differential form on $M_1$ resp.~$M_2$ (with degree $k_1$ resp.~$k_2$),
there is an obvious identification $\Omega^\bullet(M)\simeq\Omega^\bullet(M_1)\otimes\Omega^\bullet(M_2)$
given by the map
\mbox{$
\omega_1\wedge\omega_2\to \omega_1\otimes\omega_2
$}
(the linear span of forms $\omega_1\wedge\omega_2$ is dense in $\Omega^\bullet(M)$), and by the graded Leibniz rule
$
\de(\omega_1\wedge\omega_2)=(\de\omega_1)\wedge\omega_2+(-1)^{k_1}\omega_1\wedge (\de\omega_2) ,
$
that is
$$
\de=\de|_{M_1}\otimes 1+\gamma_1\otimes \de|_{M_2} \;.
$$
By adjunction, one has a similar relation for $\de^*$, proving that
the Hodge-Dirac operator on $M$ is $D=D_1\otimes 1+\gamma_1\otimes D_2$,
where $D_i$ is the Hodge-Dirac operator on $M_i$.
Since the degree of $\omega_1\wedge\omega_2$ is the sum of the degrees of $\omega_1$ and $\omega_2$, one has also
$\gamma=\gamma_1\otimes\gamma_2$. In other terms, the Dirac operator
and chirality of $\ST(M)$ are the Dirac operator and grading of the product $\ST(M_1)\otimes\ST(M_2)$.

However $\ST(M)$ is not equal nor unitary equivalent to $\ST(M_1)\otimes\ST(M_2)$, since
the algebraic tensor product \mbox{$\A_1\otimes\A_2=C_0^\infty(M_1)\otimes C_0^\infty(M_2)$} is only dense in
the algebra:
$$
\A=C_0^\infty(M_1\times M_2)\simeq C_0^\infty(M_1)\,\hat\otimes\, C_0^\infty(M_2) \;,
$$
where $\hat\otimes$ is the projective tensor product of complete locally convex Hausdorff topological algebras \cite{Gro79}.
Since $\A_1\otimes\A_2\subset\A$, clearly
$
d_{\A_1\otimes\A_2,D}(\varphi,\varphi')\leq d_{\A,D}(\varphi,\varphi') .
$

In fact, the two distances coincide. By definition of projective
tensor product, the topology of $\A$ is given by the uniform convergence of functions
together with all their derivatives: every element $f\in\A$ is the limit
of a sequence of elements $f_n\in\A_1\otimes\A_2$ which is convergent in the above-mentioned topology.
In particular, since $f_n$ is
norm-convergent to $f$, one has $\varphi(f_n)\to\varphi(f)$ for any state $\varphi$.
Moreover, the uniform convergence coincides with the convergence in the sup norm (that is also the operator norm
on $\HH$), so $[D,\pi(f_n)]$ is also norm-convergent to $[D,\pi(f)]$. This proves that 
$
d_{\A_1\otimes\A_2,D}(\varphi,\varphi')=d_{\A,D}(\varphi,\varphi') .
$
\end{proof}

\begin{rem}
Up to a completion of the algebra $\A_1\otimes \A_2$, the spectral triples $\ST(M)$ and $\ST(M_1)\otimes \ST(M_2)$
are equivalent. In particular, the Hodge-Dirac operator of $M$ is related to the Hodge-Dirac
operators of $M_1$ and $M_2$ by the formula $D=D_1\otimes 1+\gamma_1\otimes D_2$.
A similar decomposition for the Dirac's Dirac operator
holds for $\R^n$ and flat tori \cite{dabrodoss}, and is believed to be true for arbitrary
Riemannian spin manifolds.
\end{rem}

Prop.~\ref{eq:18} applied to pure states, together with
Prop.~\ref{proppyth}, shows that the spectral
distance associated to $\ST(M_1)\otimes\ST(M_2)$ is the geodesic distance
of $M_1\times M_2$. In other terms, the product of canonical
spectral triples of manifolds is orthogonal in the sense of
\eqref{eq:15}.

%%% ======================================================================

\subsection{Pythagoras inequalities for the Wasserstein distance}\label{sec:2.2}

Pythagoras theorem holds true for pure states in the product of
commutative spectral triples. There are two possible
generalization: non-pure states and noncommutative spectral triples. We show on elementary examples that even in the commutative case,
Pythagoras theorem does not hold for non-pure states. Noncommutative examples
are investigated in the next section.

Consider the Cartesian product $\R\times\R$ with the standard Euclidean
metric, and the states
\begin{equation}\label{eq:omegalambda}
\varphi_\lambda(f):=\lambda f(1)+(1-\lambda)f(0) \;,
\end{equation}
with $0\leq \lambda\leq 1$. Let us denote by $W_1$ resp.~$W_2$ the
Wasserstein distance on the first resp.~second factor of $\R\times\R$,
and by $W$ the Wasserstein distance on the product.

\begin{prop}\label{prop:Rsquare}
Let $k_\lambda=\lambda+\sqrt{2}(1-\lambda)$. Then:
$$
W(\varphi_\lambda\otimes\varphi_\lambda,\varphi_0\otimes\varphi_0)=k_\lambda\sqrt{W_1(\varphi_\lambda,\varphi_0)^2+W_2(\varphi_\lambda,\varphi_0)^2} \;,
$$
for any $0\leq\lambda\leq 1$. Note that $k_\lambda$ assumes all possible values in $[1,\sqrt{2}]$.
\end{prop}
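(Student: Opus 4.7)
The proof is essentially a direct computation, exploiting the fact that one of the two states is a Dirac mass at the origin, which forces the transport plan to be unique.

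First I would compute the two one-dimensional distances. The state $\varphi_\lambda$ corresponds to the probability measure $\lambda\delta_1+(1-\lambda)\delta_0$ on $\R$, while $\varphi_0$ is the Dirac mass at $0$. The unique transport plan moves the mass $\lambda$ sitting at $1$ to the point $0$, so
$$
W_1(\varphi_\lambda,\varphi_0)=W_2(\varphi_\lambda,\varphi_0)=\lambda,
\qquad\text{hence}\qquad
\sqrt{W_1(\varphi_\lambda,\varphi_0)^2+W_2(\varphi_\lambda,\varphi_0)^2}=\sqrt{2}\,\lambda.
$$

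Next I would analyze the two-dimensional transport. The product state $\varphi_\lambda\otimes\varphi_\lambda$ corresponds to the discrete measure
$$
\lambda^2\delta_{(1,1)}+\lambda(1-\lambda)\bigl(\delta_{(1,0)}+\delta_{(0,1)}\bigr)+(1-\lambda)^2\delta_{(0,0)}
$$
on $\R\times\R$, while $\varphi_0\otimes\varphi_0$ is the Dirac measure at the origin. Because the target is a single point, there is only one coupling in the Kantorovich formulation: every unit of source mass is sent to $(0,0)$. Consequently $W$ equals the expected Euclidean distance of the source from the origin:
$$
W(\varphi_\lambda\otimes\varphi_\lambda,\varphi_0\otimes\varphi_0)
=\lambda^2\sqrt{2}+2\lambda(1-\lambda)\cdot 1+(1-\lambda)^2\cdot 0
=\sqrt{2}\lambda^2+2\lambda(1-\lambda).
$$

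A short algebraic manipulation then finishes the argument: factoring $\sqrt{2}\,\lambda$ out of the right-hand side gives $\sqrt{2}\,\lambda\bigl(\lambda+\sqrt{2}(1-\lambda)\bigr)=k_\lambda\cdot\sqrt{2}\,\lambda$, which is precisely $k_\lambda\sqrt{W_1^2+W_2^2}$. Finally, $\lambda\mapsto k_\lambda=\lambda+\sqrt{2}(1-\lambda)$ is continuous and strictly decreasing on $[0,1]$ with $k_1=1$ and $k_0=\sqrt{2}$, so by the intermediate value theorem it attains every value in $[1,\sqrt{2}]$.

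The calculation is entirely elementary; there is no real obstacle beyond recognizing that the transport plan is unique when one marginal is concentrated at a single point, which makes the two-dimensional Wasserstein distance explicitly computable and avoids any variational argument on the Kantorovich dual.
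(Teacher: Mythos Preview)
Your proof is correct and reaches the same numerical identities as the paper, but via the primal (transport-plan) formulation of $W$ rather than the dual (Kantorovich--Rubinstein) formulation used there. The paper computes each distance as a supremum over $1$-Lipschitz functions: for $W_1$ it exhibits $f(x)=x$ as optimal, and for $W$ on $\R^2$ it bounds the linear expression \eqref{eq:fonRtwo} by $\sqrt{2}\lambda^2+2\lambda(1-\lambda)$ using the Lipschitz condition and then saturates it with $h(x_1,x_2)=\sqrt{x_1^2+x_2^2}$. Your observation that a Dirac target forces a unique coupling bypasses this variational step entirely and makes the computation a one-liner; it is the cleaner argument for this specific pair of states. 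The paper's dual approach, on the other hand, stays within the Lipschitz/sup framework that matches the definition of $W$ given in \S\ref{sec:3.1} and the spectral-distance formula, so no appeal to Kantorovich duality is needed. Since you start from the primal side, you are implicitly invoking that duality (trivially valid here for finitely supported measures on $\R^n$); it would do no harm to say so in one line.
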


\begin{proof}
As recalled in \S\ref{sec:3.1},
$W_1(\varphi_\lambda,\varphi_0)=W_2(\varphi_\lambda,\varphi_0)$ is the supremum of
$\lambda\{f(1)-f(0)\}$ over real $1$-Lipschitz functions $f$ on $\R$. This is equal
to $\lambda$ (the sup is attained on the function $f(x)=x$).

On the other hand, identifying $\sum_if_i\otimes g_i\in C_0(\R)\otimes
C_0(\R)$ with the function $h\in C_0(\R^2)$, $h(x_1,x_2)=\sum_if_i(x) g_i(y)$,
one has 
$$
  (\varphi_\lambda \otimes \varphi_\lambda)(h) = \lambda^2 h(1,1) +
  \lambda(1-\lambda)(h(1,0) + h(0,1)) + (1-\lambda)^2 h(0,0).  
$$
Therefore $W(\varphi_\lambda\otimes\varphi_\lambda,\varphi_0\otimes\varphi_0)$
is the supremum of
\begin{equation}\label{eq:fonRtwo}
\lambda^2\{h(1,1)-h(0,0)\}+\lambda(1-\lambda)\{h(1,0)-h(0,0)\}
+\lambda(1-\lambda)\{h(0,1)-h(0,0)\} \;,
\end{equation}
over $1$-Lipschitz functions $h$ on $\R^2$. From $h(x)-h(y)\leq d_{\text{geo}}(x,y)$
it follows that this is no greater than
$\sqrt{2}\lambda^2+2\lambda(1-\lambda)=\sqrt{2}\lambda k_\lambda$.
The supremum is saturated by the function $h(x_1,x_2)=\sqrt{x_1^2+x_2^2}$,
proving that
\begin{equation*}
W(\varphi_\lambda\otimes\varphi_\lambda,\varphi_0\otimes\varphi_0)
=\sqrt{2}\lambda k_\lambda = k_\lambda {\sqrt{W_1(\varphi_\lambda,\varphi_0)^2+W_2(\varphi_\lambda,\varphi_0)^2}}.
\vspace{-24pt}
\end{equation*}
\end{proof}
\noindent Note that for $\lambda\to 0^+$, $k_\lambda$ goes to $\sqrt{2}$ and not to zero, although
for $\lambda=0$, Pythagoras equality is trivially satisfied.

\smallskip

One can show that the same argument works on a torus (with
flat metric), providing then an example where the space is a compact one.
These examples show that the best one may hope, for arbitrary
states and manifolds, is an inequality like \eqref{eq:Wineq}. In the next section, we prove
such an inequality, also holding in the noncommutative case.

\section{Pythagoras inequalities for products of spectral triples}\label{sec:3.2}

In this section we consider a product of arbitrary (not necessarily commutative)
spectral triples. We shall use the shorthand notation $d=d_{\A,D}$ and $d_i=d_{\A_i,D_i}$ for $i=1,2$.
Let us state the main theorem.

\begin{thm}\label{thm1}
Given two spectral triples $(\A_i, \HH_i, D_i)$, $i=1,2$, one has:
\begin{itemize}\itemsep=-5pt
\item[i)] For any two separable states $\varphi=\varphi_1\otimes\varphi_2$ and $\varphi'=\varphi_1'\otimes\varphi_2'$, we have
\begin{subequations}\label{eq:main}
\begin{equation}
d(\varphi,\varphi') \leq d_1(\varphi_1,\varphi'_1)+d_2(\varphi_2,\varphi'_2) \;.
\label{eq:mainA}
\end{equation}
\item[ii)] Furthermore, if the spectral triples are unital, we also have:
\begin{equation}
d(\varphi,\varphi') \geq \sqrt{d_1(\varphi_1,\varphi'_1)^2+d_2(\varphi_2,\varphi'_2)^2} \;.
\label{eq:mainB}
\end{equation}
\end{subequations}
\end{itemize}
\end{thm}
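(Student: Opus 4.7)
The plan is to prove the two inequalities separately: the upper bound (i) by a triangle-inequality reduction combined with a slicing argument, and the lower bound (ii) by an explicit construction of a test element that crucially exploits the units of $\A_1$ and $\A_2$.

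For (i), I would insert the intermediate state $\varphi_1'\otimes\varphi_2$ and apply the triangle inequality for $d$, reducing the task to the two ``marginal'' estimates
\[
d(\varphi_1\otimes\varphi_2,\,\varphi_1'\otimes\varphi_2)\leq d_1(\varphi_1,\varphi_1'),\qquad d(\varphi_1'\otimes\varphi_2,\,\varphi_1'\otimes\varphi_2')\leq d_2(\varphi_2,\varphi_2').
\]
For the first (the second being symmetric), pick a selfadjoint $a=\sum_i a^{(1)}_i\otimes a^{(2)}_i\in\A_1\otimes\A_2$ with $\|[D,a]\|\leq 1$ and form the slice
\[
b:=(\mathrm{id}\otimes\varphi_2)(a)=\sum_i \varphi_2(a^{(2)}_i)\,a^{(1)}_i\in\A_1,
\]
which is selfadjoint because $\mathrm{id}\otimes\varphi_2$ is $*$-preserving. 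Then $(\varphi_1-\varphi_1')(b)=(\varphi_1\otimes\varphi_2-\varphi_1'\otimes\varphi_2)(a)$, and a direct computation on elementary tensors gives $[D_1,b]=(\mathrm{id}\otimes\varphi_2)([D_1\otimes 1,a])$. The heart of the argument is then the norm estimate $\|[D_1\otimes 1,a]\|\leq\|[D,a]\|$, which I would derive by decomposing $\HH=\HH_1\otimes\HH_2$ along the $\pm 1$ eigenspaces of $\gamma_1\otimes 1$: since $[D_1\otimes 1,a]$ anticommutes with $\gamma_1\otimes 1$ whereas $[\gamma_1\otimes D_2,a]$ commutes with it, the former is precisely the off-diagonal part of $[D,a]=[D_1\otimes 1,a]+[\gamma_1\otimes D_2,a]$ in this block decomposition and is therefore dominated in norm by $[D,a]$ as a compression. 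Combined with the standard contractivity of slicing by a state on the minimal C$^*$-tensor product, this yields $\|[D_1,b]\|\leq 1$, and hence $(\varphi_1-\varphi_1')(b)\leq d_1(\varphi_1,\varphi_1')$.

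For (ii), write $\delta_i:=d_i(\varphi_i,\varphi_i')$. I would exploit unitality to test the supremum in the definition of $d(\varphi,\varphi')$ on elements of the form
\[
a=\alpha_1\,a_1\otimes 1+\alpha_2\,1\otimes a_2,\qquad \alpha_1,\alpha_2\in\R,\quad \alpha_1^2+\alpha_2^2=1,
\]
where, for fixed $\varepsilon>0$, the $a_i=a_i^*\in\A_i$ are chosen with $\|[D_i,a_i]\|\leq 1$ and $(\varphi_i-\varphi_i')(a_i)\geq\delta_i-\varepsilon$. Setting $P:=\alpha_1[D_1,a_1]\otimes 1$ and $Q:=\alpha_2\,\gamma_1\otimes[D_2,a_2]$, one has $[D,a]=P+Q$ with $P,Q$ anti-selfadjoint and mutually anticommuting (because $[D_1,a_1]$ anticommutes with $\gamma_1$); the cross terms in $(P+Q)^*(P+Q)$ therefore cancel and I obtain
\[
\|[D,a]\|^2=\|P^*P+Q^*Q\|\leq\|P\|^2+\|Q\|^2\leq\alpha_1^2+\alpha_2^2=1.
\]
Since $\varphi(a)-\varphi'(a)=\alpha_1(\varphi_1-\varphi_1')(a_1)+\alpha_2(\varphi_2-\varphi_2')(a_2)\geq\alpha_1\delta_1+\alpha_2\delta_2-(\alpha_1+\alpha_2)\varepsilon$, the choice $(\alpha_1,\alpha_2)=(\delta_1,\delta_2)/\sqrt{\delta_1^2+\delta_2^2}$ gives $d(\varphi,\varphi')\geq\sqrt{\delta_1^2+\delta_2^2}-\sqrt{2}\,\varepsilon$, and letting $\varepsilon\to 0$ concludes.

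The step I expect to be most delicate is the operator-norm estimate $\|[D_1\otimes 1,a]\|\leq\|[D,a]\|$ in (i): the grading-based block decomposition makes it conceptually transparent, but some care is needed to justify that slicing by the (possibly non-vector) state $\varphi_2$ preserves the operator-norm bound for the commutator $[D_1\otimes 1,a]$ viewed inside $\B(\HH_1)\otimes\B(\HH_2)$. Once these technicalities are settled, everything else reduces to straightforward $*$-algebraic identities and the one-parameter Cauchy--Schwarz optimisation in the construction for (ii).
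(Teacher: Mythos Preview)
Your proposal is correct and follows essentially the paper's approach: the same slicing by $\mathrm{id}\otimes\varphi_2$ (the paper's Lemma~\ref{lemma:pos}) and grading-based norm bound (the paper's Lemma~\ref{lemma:sqrt2}/Corollary~\ref{cor:ineq}) for part (i), and the same split test elements $a_1\otimes 1+1\otimes a_2$ with the anticommutation identity (the paper's Lemma~\ref{lemma3}) and Cauchy--Schwarz optimisation (Lemma~\ref{lemma2}) for part (ii).

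Two minor organisational differences are worth noting. In (i) you insert the intermediate state $\varphi_1'\otimes\varphi_2$ and invoke the triangle inequality, whereas the paper writes directly $\varphi(a)-\varphi'(a)=(\varphi_1-\varphi_1')(a_1)+(\varphi_2-\varphi_2')(a_2)$ with $a_1=(\mathrm{id}\otimes\varphi_2)(a)$ and $a_2=(\varphi_1'\otimes\mathrm{id})(a)$; this is the same telescoping, just packaged differently. For the norm estimate $\max\{\|A\|,\|B\|\}\leq\|A+B\|$ (with $A$ odd, $B$ even), you use the block decomposition along the eigenspaces of $\gamma_1\otimes 1$ and bound the off-diagonal and diagonal parts as compressions, while the paper instead exploits the identity $A-B=-\gamma(A+B)\gamma$ together with the triangle inequality. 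Both arguments are short and yield the same conclusion; your block picture is perhaps the more geometric of the two. The technical concern you raise about slicing by a non-vector state is exactly Lemma~\ref{lemma:pos}, dispatched there by the observation that $\mathrm{id}\otimes\varphi_2$ has norm~$1$.
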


\noindent
Notice that from \eqref{eq:mainA} and the observation that
$(a+b)^2=2(a^2+b^2)-(a-b)^2\leq 2(a^2+b^2)$ it follows
\begin{equation}\label{eq:sqrtoftwo}
d(\varphi,\varphi')\leq
\sqrt{2}\sqrt{d_1(\varphi_1,\varphi'_1)^2+d_2(\varphi_2,\varphi'_2)^2} \;.
\end{equation}
As an easy corollary, one also retrieves a result of \cite{Martinetti:2002ij}: 
\begin{cor}\label{cor9}
In the unital case, if $\varphi_2=\varphi'_2$ we have
$
d(\varphi,\varphi')=d_1(\varphi_1,\varphi'_1) ,
$
and similarly if $\varphi_1=\varphi'_1$ we have
$
d(\varphi,\varphi')=d_2(\varphi_2,\varphi'_2) .
$
\end{cor}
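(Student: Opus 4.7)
The approach is a straightforward sandwich argument that combines the two inequalities established in Theorem \ref{thm1}. The key observation is that since $d_2$ is an extended metric on $\mathcal{S}(\A_2)$, the hypothesis $\varphi_2 = \varphi'_2$ forces $d_2(\varphi_2, \varphi'_2) = 0$. Feeding this into the upper bound \eqref{eq:mainA} immediately gives
\[
d(\varphi,\varphi') \;\leq\; d_1(\varphi_1,\varphi'_1) + 0 \;=\; d_1(\varphi_1,\varphi'_1),
\]
and this half does not even need unitality. Feeding the same vanishing into the lower bound \eqref{eq:mainB} (which is where the unitality hypothesis is consumed) gives
\[
d(\varphi,\varphi') \;\geq\; \sqrt{d_1(\varphi_1,\varphi'_1)^2 + 0} \;=\; d_1(\varphi_1,\varphi'_1).
\]
Combining the two yields the claimed equality $d(\varphi,\varphi') = d_1(\varphi_1,\varphi'_1)$.

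The second assertion, corresponding to $\varphi_1 = \varphi'_1$, follows by the same argument with the roles of the indices $1$ and $2$ interchanged; this is legitimate because both inequalities in Theorem \ref{thm1} are manifestly symmetric in the two factors. There is no real obstacle to the proof: all the nontrivial content is already packed into Theorem \ref{thm1}, and the corollary is simply the observation that when one of the factor distances vanishes, the upper and lower Pythagoras bounds collapse onto the same value.
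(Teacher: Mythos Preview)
Your proof is correct and is precisely the argument the paper has in mind: the corollary is stated immediately after Theorem~\ref{thm1} with no separate proof, as it follows at once by setting one factor distance to zero in \eqref{eq:mainA} and \eqref{eq:mainB}. Your remark that the upper bound does not require unitality, so that only the ``$\geq$'' direction consumes the unital hypothesis, is also in line with the paper's later observation (just before Prop.~17) that it is exactly this inequality which fails in the non-unital examples.
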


\noindent
Theorem \ref{thm1} generalizes to arbitrary spectral triples
the results of \cite{MT11}, where the first triple was assumed to be unital,
and the second one was the canonical spectral triple on $\C^2$.

Let us recall that by unital spectral triple we mean that both the conditions \mbox{$\exists\,e\in\A$}
and $\pi(e)=1$ are satisfied. The importance of this requirement for (\ref{eq:mainB}) is
discussed in \S\ref{sec:deg}.
If $\exists\;e\in\A$ but $\pi(e)\neq 1$, we still have a legitimate spectral triple, although
non-unital, and we stress that \eqref{eq:mainA} is still valid in this case.

\subsection{Proof of the main theorem}\label{sec:3.3}
This section is devoted to the proof of theorem \ref{thm1}.
We need some preliminary lemmas.

\begin{lemma}\label{lemma2}
For any $x,y\geq 0$,
\begin{equation}\label{eq:trick}
\sup_{\substack{\alpha,\beta\geq 0 \\[1pt] \alpha^2+\beta^2\leq 1}}
(\alpha x+\beta y)=\sqrt{x^2+y^2} \;.
\end{equation}
\end{lemma}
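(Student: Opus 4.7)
The plan is to prove the two inequalities (${\leq}$ and ${\geq}$) separately, both by elementary means. For the upper bound, I would apply the Cauchy--Schwarz inequality in $\R^2$ to the vectors $(\alpha,\beta)$ and $(x,y)$:
\begin{equation*}
\alpha x+\beta y \leq \sqrt{\alpha^2+\beta^2}\cdot\sqrt{x^2+y^2}\leq \sqrt{x^2+y^2},
\end{equation*}
where the second inequality uses the constraint $\alpha^2+\beta^2\leq 1$. Taking the supremum over admissible $(\alpha,\beta)$ yields $\sup(\alpha x+\beta y)\leq \sqrt{x^2+y^2}$.

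For the reverse inequality, I would exhibit an explicit maximizer. If $x=y=0$, both sides are zero and there is nothing to prove. Otherwise set
\begin{equation*}
\alpha_0:=\frac{x}{\sqrt{x^2+y^2}},\qquad \beta_0:=\frac{y}{\sqrt{x^2+y^2}}.
\end{equation*}
Since $x,y\geq 0$ we have $\alpha_0,\beta_0\geq 0$, and by construction $\alpha_0^2+\beta_0^2=1$, so $(\alpha_0,\beta_0)$ is admissible. A direct computation gives $\alpha_0 x+\beta_0 y=\sqrt{x^2+y^2}$, proving that the supremum is attained with value exactly $\sqrt{x^2+y^2}$.

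There is no real obstacle here; the only minor point to be careful about is the degenerate case $x=y=0$, which has to be handled separately to avoid dividing by zero when defining the maximizer. Combining the two bounds gives \eqref{eq:trick}.
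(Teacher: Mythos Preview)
Your proof is correct and essentially identical to the paper's own argument: both use Cauchy--Schwarz for the upper bound and the explicit choice $(\alpha,\beta)=(x,y)/\sqrt{x^2+y^2}$ for the lower bound, with the degenerate case $x=y=0$ handled separately.
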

\begin{proof}
If $(x,y)=(0,0)$
the statement is trivial. Assuming $(x,y)\neq (0,0)$, by choosing
$(\alpha,\beta)=(x,y)/\sqrt{x^2+y^2}$ one proves that the left
hand side of \eqref{eq:trick} is greater than or equal to the
right hand side. On the other hand, by Cauchy-Bunyakovsky-Schwarz inequality
$$
\alpha x+\beta y\leq
\sqrt{\alpha^2+\beta^2} 
\sqrt{x^2+y^2}\leq
\sqrt{x^2+y^2}
$$
if $\alpha^2+\beta^2\leq 1$. This proves that the inequality is
actually an equality.
\end{proof}

\begin{lemma}\label{lemma3}
For any $a=a_1\otimes 1+1\otimes a_2$, with $a_i\in\A_i$, we have
$$
\|[D,a]\|_{\B(\HH)}^2=\|[D_1,a_1]\|_{\B(\HH_1)}^2+\|[D_2,a_2]\|_{\B(\HH_2)}^2 \;.
$$
\end{lemma}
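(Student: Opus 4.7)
The plan is to compute $[D,a]$ explicitly, exploit the anticommutation of $\gamma_1$ with $[D_1,a_1]$ to decouple the square of $[D,a]$ into a sum over the two tensor factors, and then read off the operator norm from two commuting positive summands.

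First, I would apply the Leibniz rule to $D=D_1\otimes 1+\gamma_1\otimes D_2$ and $a=a_1\otimes 1+1\otimes a_2$. Using the even spectral triple axioms that $\gamma_1$ commutes with $a_1$ and anticommutes with $D_1$, a direct expansion gives
\begin{equation*}
[D,a]=A\otimes 1+\gamma_1\otimes B,\qquad A:=[D_1,a_1],\quad B:=[D_2,a_2],
\end{equation*}
together with the crucial relation $\gamma_1 A+A\gamma_1=0$.

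Second, I would reduce to the self-adjoint case, which is the only one relevant for the spectral distance: any self-adjoint $a$ admits a decomposition with $a_i=a_i^*$ up to a scalar shift $a_1\mapsto a_1+c$, $a_2\mapsto a_2-c$ that is invisible to the commutators. Assuming $a_i=a_i^*$ makes $A$ and $B$ skew-adjoint, so that squaring $[D,a]$ and using the anticommutation kills the cross term:
\begin{equation*}
[D,a]^2=A^2\otimes 1+(A\gamma_1+\gamma_1 A)\otimes B+\gamma_1^2\otimes B^2=A^2\otimes 1+1\otimes B^2.
\end{equation*}

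Finally, since $[D,a]$ is skew-adjoint,
\begin{equation*}
\|[D,a]\|_{\B(\HH)}^2=\|{-}[D,a]^2\|_{\B(\HH)}=\|A^*A\otimes 1+1\otimes B^*B\|_{\B(\HH)}.
\end{equation*}
The two summands are commuting positive operators with norms $\|A\|^2$ and $\|B\|^2$, and the triangle inequality provides the upper bound. The main (mild) obstacle is the matching lower bound: I would pick unit vectors $u_n\in\HH_1$ and $v_m\in\HH_2$ with $\|Au_n\|^2\to\|A\|^2$ and $\|Bv_m\|^2\to\|B\|^2$, and evaluate the positive operator on $u_n\otimes v_m$, obtaining $\|Au_n\|^2+\|Bv_m\|^2\to\|A\|^2+\|B\|^2$. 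Combined with the upper bound, this yields the identity $\|[D,a]\|^2=\|A\|^2+\|B\|^2$.
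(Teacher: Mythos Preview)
Your argument is correct and mirrors the paper's proof: both expand $[D,a]=A\otimes 1+\gamma_1\otimes B$, exploit the anticommutation $\gamma_1 A+A\gamma_1=0$ so that $[D,a]^*[D,a]$ collapses to $A^*A\otimes 1+1\otimes B^*B$, then bound the norm above by the triangle inequality and below by evaluating on product vectors. Your explicit reduction to self-adjoint $a_i$ is actually more careful than the paper, which silently writes ``$A=-A^*$, $B=-B^*$'' without justification; in either case only the self-adjoint situation is needed in the application to Theorem~\ref{thm1}.
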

\begin{proof}
We have
$$
[D,a]=[D_1,a_1]\otimes 1+\gamma_1\otimes [D_2,a_2] \;.
$$
Call $A=[D_1,a_1]\otimes 1$, $B=1\otimes [D_2,a_2]$,
$\gamma=\gamma_1\otimes 1$, and notice that
$\|A\|_{\B(\HH)}=\|[D_1,a_1]\|_{\B(\HH_1)}$,
$\|B\|_{\B(\HH)}=\|[D_2,a_2]\|_{\B(\HH_2)}$,
$\|A+\gamma B\|_{\B(\HH)}=\|[D,a]\|_{\B(\HH)}$.
The lemma amounts to prove that
$$
\|A+\gamma B\|_{\B(\HH)}^2=\|A\|_{\B(\HH)}^2+\|B\|_{\B(\HH)}^2 \;.
$$
Since $A=-A^*$, $B=-B^*$,
$A\gamma+\gamma A=0$, $B\gamma-\gamma B=0$ and $[A,B]=0$,
we have
$$
(A+\gamma B)^*(A+\gamma B)=-A^2-B^2+\gamma[A,B]=-A^2-B^2 \;,
$$
so that, by the triangle inequality and the $C^*$-norm property,
$$\|A+\gamma B\|_{\B(\HH)}^2\leq\|A^2\|_{\B(\HH)}+\|B^2\|_{\B(\HH)}.$$
To prove the opposite
inequality, let us consider 
a supremum
over vectors $v=v_1\otimes v_2$:
\begin{align*}
\|A+\gamma B\|_{\B(\HH)}^2 &\geq\sup_{0\neq v=v_1\otimes
  v_2\in\HH}\frac{\inner{(A+\gamma B)v,(A+\gamma B) v}}{\|v\|_{\HH}^2}
&=&\sup_{0\neq v=v_1\otimes v_2\in\HH}\frac{\inner{v,-(A^2+B^2)v}}{\|v\|_{\HH}^2}\\
&=\sup_{0\neq v=v_1\otimes v_2\in\HH}\left\{
\frac{\inner{v,-A^2v}}{\|v\|_{\HH}^2}+\frac{\inner{v,-B^2v}}{\|v\|_{\HH}^2}\right\}
&=&\sup_{0\neq v=v_1\otimes v_2\in\HH}\left\{
\frac{\inner{Av,Av}}{\|v\|_{\HH}^2}+\frac{\inner{Bv,Bv}}{\|v\|_{\HH}^2}\right\}\\
&=\|A\|^2_{\B(\HH_1)}+\|B\|^2_{\B(\HH_2)} \;.& &
\end{align*}
This proves the lemma.
\end{proof}

\noindent
\textbf{Proof of Theorem \ref{thm1}, point (ii).}
Let $(\A_1,\HH_1,D_1,\gamma_1)$ and $(\A_2,\HH_2,D_2)$ be two unital
spectral triples and $\varphi=\varphi_1\otimes\varphi_2$ and $\varphi'=\varphi_1'\otimes\varphi_2'$
two  separable states. By definition
$$
d(\varphi,\varphi')
=\sup_{a=a^*\in\A}
\Big\{ \varphi(a)-\varphi'(a)\;:\;\|[D,a]\|^2_{\B(\HH)}\leq 1 \Big\} \;.
$$
We get a lower bound if we take the supremum over elements of
the form $a=a_1\otimes 1+1\otimes a_2$, with $a_1=a_1^*\in\A_1$
and $a_2=a_2^*\in\A_2$. Since
$$
\varphi(a)-\varphi'(a)=\varphi_1(a_1)-\varphi_1'(a_1)+\varphi_2(a_2)-\varphi_2'(a_2) \;,
$$
by Lemma \ref{lemma3} we get
\begin{align*}
d(\varphi,\varphi') &\geq
 \sup_{a_i=a^*_i\in\A_i}
\Big\{ \varphi_1(a_1)-\varphi_1'(a_1)+\varphi_2(a_2)-\varphi_2'(a_2) \,:\\
& \hspace{5cm}
\|[D_1,a_1]\|_{\B(\HH_1)}^2+\|[D_2,a_2]\|_{\B(\HH_2)}^2 \leq 1 \Big\}
\\
&=\sup_{\alpha^2+\beta^2\leq 1}
\left\{
\sup_{a_1=a^*_1\in\A_1}
\Big\{ \varphi_1(a_1)-\varphi_1'(a_1)\,:\,\|[D_1,a_1]\|_{\B(\HH_1)}\leq\alpha \Big\}
\right.
\\
&\hspace{3cm} +
\left.\sup_{a_2=a^*_2\in\A_2}
\Big\{ \varphi_2(a_2)-\varphi_2'(a_2)\,:\,\|[D_2,a_2]\|_{\B(\HH_2)}\leq\beta \Big\}
\right\}
\\
&=\sup_{\alpha^2+\beta^2\leq 1}
\left\{\alpha\,d_1(\varphi_1,\varphi'_1)+\beta\,d_2(\varphi_2,\varphi'_2)
\right\} \;.
\end{align*}
Applying Lemma \ref{lemma2} to last equation, we prove \eqref{eq:mainB}.
\qed

\medskip

\noindent
{\bf Remark:} note that what spoils the proof
in the non-unital case is the fact that operators of the form
$a=a_1\otimes 1+1\otimes a_2$ are in general not elements of
$\A=\A_1\otimes\A_2$.

\begin{lemma}\label{lemma:pos}
Let $a\in\A=\A_1\otimes \A_2$ and for any two $\varphi_i\in S(\A_i)$,
$i=1,2$, let us call $a_1:=(\mathtt{id}\otimes\varphi_2)(a)\in\A_1$
and
$a_2:=(\varphi_1\otimes \mathtt{id})(a)\in\A_2$. Then
$$
\|[D_1,a_1]\|_{\B(\HH_1)}\leq\|[D_1\otimes 1,a]\|_{\B(\HH)}
\qquad\text{and}\qquad
\|[D_2,a_2]\|_{\B(\HH_2)}\leq\|[\gamma_1\otimes D_2,a]\|_{\B(\HH)}
\;.
$$
\end{lemma}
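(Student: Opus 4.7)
The plan is to express each of the operators $[D_i,a_i]$ as the image of a slice map applied to a commutator in $\B(\HH)$, and then to bound slice maps by the ambient operator norm. Writing $a=\sum_k b_k\otimes c_k$ as a finite sum in $\A_1\otimes\A_2$, a direct computation on simple tensors gives
\begin{equation*}
[D_1,a_1]=(\mathtt{id}\otimes\varphi_2)\bigl([D_1\otimes 1,a]\bigr),
\end{equation*}
since both sides equal $\sum_k\varphi_2(c_k)[D_1,b_k]$. For the second estimate, I would first use that $\gamma_1$ commutes with every element of $\A_1$ to factor
\begin{equation*}
[\gamma_1\otimes D_2,a]=(\gamma_1\otimes 1)\,[1\otimes D_2,a].
\end{equation*}
Since $\gamma_1\otimes 1$ is a self-adjoint unitary on $\HH$, this yields $\|[\gamma_1\otimes D_2,a]\|_{\B(\HH)}=\|[1\otimes D_2,a]\|_{\B(\HH)}$, and an analogous computation gives $[D_2,a_2]=(\varphi_1\otimes\mathtt{id})\bigl([1\otimes D_2,a]\bigr)$.

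With these identities, the statement reduces to the contractivity of the slice maps $\mathtt{id}\otimes\varphi_2$ on $\B(\HH_1)\otimes\bar\A_2$ and $\varphi_1\otimes\mathtt{id}$ on $\bar\A_1\otimes\B(\HH_2)$, both viewed as sitting inside $\B(\HH)$. To prove the first, I would pass to the GNS triple $(\pi_2,\HH_{\varphi_2},\xi_2)$ of the state $\varphi_2$ on the $C^*$-closure $\bar\A_2$, with $\|\xi_2\|=1$ and $\varphi_2(c)=\inner{\xi_2,\pi_2(c)\xi_2}$, and introduce the isometry $V_2\colon\HH_1\to\HH_1\otimes\HH_{\varphi_2}$ defined by $V_2 v=v\otimes\xi_2$. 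A check on simple tensors then yields
\begin{equation*}
(\mathtt{id}\otimes\varphi_2)(X)=V_2^*\,(\mathtt{id}\otimes\pi_2)(X)\,V_2 \qquad\forall\;X\in\B(\HH_1)\otimes\bar\A_2.
\end{equation*}
Since $\mathtt{id}\otimes\pi_2$ extends to a $*$-homomorphism between minimal tensor product $C^*$-algebras it is norm-decreasing, and the minimal $C^*$-tensor norm on $\B(\HH_1)\otimes\bar\A_2$ agrees with the spatial norm inherited from $\B(\HH_1\otimes\HH_2)=\B(\HH)$; these two facts together give the contractivity of the slice map.

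Combining the two steps proves both inequalities. I expect the main obstacle to be the slice-map estimate: for a general, possibly non-normal state $\varphi_2$ a vector-state bound does not suffice, and one really needs the GNS representation together with the identification of the minimal tensor norm with the spatial one. A secondary subtle point is the $\gamma_1$-twist: a naive application of $\varphi_1\otimes\mathtt{id}$ to $[\gamma_1\otimes D_2,a]$ produces $\sum_k\varphi_1(\gamma_1 b_k)[D_2,c_k]$ rather than $[D_2,a_2]$, which is why factoring out $\gamma_1\otimes 1$ as a unitary before slicing is essential.
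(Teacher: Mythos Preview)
Your proof is correct and follows the same route as the paper: express $[D_i,a_i]$ as the image of a slice map applied to the relevant commutator, handle the $\gamma_1$-twist by factoring out the unitary $\gamma_1\otimes 1$ before slicing, and conclude by contractivity of the slice map. The only difference is that the paper simply asserts contractivity from ``states have norm $1$'', whereas you supply a justification via the GNS construction and the identification of the minimal and spatial tensor norms; this is a welcome clarification rather than a different argument.
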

\begin{proof}
We use the obvious identification of $\A_1\otimes\C\,1$ with $\A_1$
and $\C\,1\otimes\A_2$ with $\A_2$. We also identify $\B(\HH_1)\otimes\C\,1$
with $\B(\HH_1)$ and $\C\,1\otimes\B(\HH_2)$ with $\B(\HH_2)$.

States have norm $1$. Hence the maps
$\mathtt{id}\otimes\varphi_2:\B(\HH_1)\otimes\A_2\to\B(\HH_1)$
and
$\varphi_1\otimes \mathtt{id}:\A_1\otimes\B(\HH_2)\to\B(\HH_2)$
have norm $1$. This means
\begin{align*}
\|[D_1,a_1]\|_{\B(\HH_1)}
&=\|[D_1,(\mathtt{id}\otimes\varphi_2)(a)]\|_{\B(\HH_1)}
\\
&=\|(\mathtt{id}\otimes\varphi_2)[D_1\otimes 1,a]\|_{\B(\HH_1)}
\leq\|[D_1\otimes 1,a]\|_{\B(\HH)} \;.
\end{align*}
Similarly, one has $\|[D_2,a_2]\|_{\B(\HH_2)}\leq\|[1\otimes D_2,a]\|_{\B(\HH)}$.
To conclude the proof we notice that, since $\gamma_1^*\gamma_1=1$,
$
\|[1\otimes D_2,a]\|_{\B(\HH)}=\|(\gamma_1\otimes 1)[1\otimes D_2,a]\|_{\B(\HH)}
=\|[\gamma_1\otimes D_2,a]\|_{\B(\HH)} .
$
\end{proof}

\begin{lemma}\label{lemma:sqrt2}
Let $\gamma$ be a grading, $A$ an odd operator and $B$ an even operator
(i.e.~$A\gamma+\gamma A=0$ and $B\gamma-\gamma B=0$).
Then
\begin{equation}\label{eq:ineqA}
\max\big\{\|A\|_{\B(\HH)},\|B\|_{\B(\HH)}\big\}\leq\|A+B\|_{\B(\HH)} \;.
\end{equation}
\end{lemma}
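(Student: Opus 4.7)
The plan is to exploit the fact that conjugation by $\gamma$ is an isometry of $\B(\HH)$ (since $\gamma=\gamma^*$ and $\gamma^2=1$, the map $T\mapsto \gamma T\gamma$ has norm one and is actually an involution), and then recover $A$ and $B$ separately from $A+B$ by averaging with its $\gamma$-conjugate. Specifically, the oddness of $A$ and evenness of $B$ give
$$
\gamma(A+B)\gamma = -A+B,
$$
so that
$$
2B = (A+B) + \gamma(A+B)\gamma, \qquad 2A = (A+B) - \gamma(A+B)\gamma.
$$
Applying the triangle inequality and $\|\gamma(A+B)\gamma\|_{\B(\HH)} = \|A+B\|_{\B(\HH)}$, both $\|B\|_{\B(\HH)}\leq \|A+B\|_{\B(\HH)}$ and $\|A\|_{\B(\HH)}\leq \|A+B\|_{\B(\HH)}$ follow immediately, which is exactly \eqref{eq:ineqA}.

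Alternatively, and perhaps more conceptually, one can argue at the level of vectors. Writing $\HH = \HH_+\oplus\HH_-$ for the eigenspace decomposition of $\gamma$, any $v\in\HH$ splits as $v=v_++v_-$. Since $B$ preserves the grading and $A$ reverses it, $Bv_\pm\in\HH_\pm$ while $Av_\pm\in\HH_\mp$, so $Bv$ and $Av$ lie in orthogonal subspaces. Consequently
$$
\|(A+B)v\|_{\HH}^2 = \|Av\|_{\HH}^2 + \|Bv\|_{\HH}^2 \geq \max\bigl\{\|Av\|_{\HH}^2,\|Bv\|_{\HH}^2\bigr\},
$$
and taking the supremum over unit vectors yields the claim. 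Either route is essentially one line once the observation about $\gamma$-parity is made.

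I do not expect any real obstacle here; the only subtlety is to recognize that conjugation by a self-adjoint unitary (which is what $\gamma$ is) preserves the operator norm, so that $\|\gamma(A+B)\gamma\|_{\B(\HH)}=\|A+B\|_{\B(\HH)}$. Of the two approaches above, I would favor the averaging argument because it is the shortest and makes no reference to the internal structure of $\HH$.
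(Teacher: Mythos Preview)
Your first (averaging) argument is correct and is exactly the paper's proof: the paper writes $A=\tfrac{1}{2}(A+B)+\tfrac{1}{2}(A-B)$ and $B=\tfrac{1}{2}(A+B)-\tfrac{1}{2}(A-B)$, observes that $A-B=-\gamma(A+B)\gamma$, and concludes via the triangle inequality and unitarity of $\gamma$. Your formulation $2A=(A+B)-\gamma(A+B)\gamma$, $2B=(A+B)+\gamma(A+B)\gamma$ is the same computation, just rearranged.

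Your second (vector-level) argument, however, contains a genuine gap. From $Bv_\pm\in\HH_\pm$ and $Av_\pm\in\HH_\mp$ it does \emph{not} follow that $Av$ and $Bv$ are orthogonal: both $Av=Av_++Av_-$ and $Bv=Bv_++Bv_-$ have, in general, nonzero components in each of $\HH_+$ and $\HH_-$. Concretely, on $\HH=\C^2$ with $\gamma=\mathrm{diag}(1,-1)$, take $A=\left(\begin{smallmatrix}0&1\\1&0\end{smallmatrix}\right)$, $B=\left(\begin{smallmatrix}1&0\\0&0\end{smallmatrix}\right)$, and $v=(1,1)^t$; then $Av=(1,1)^t$ and $Bv=(1,0)^t$ are not orthogonal, and indeed $\|(A+B)v\|_{\HH}^2=5\neq 3=\|Av\|_{\HH}^2+\|Bv\|_{\HH}^2$. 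The orthogonality claim is true only for vectors of pure parity, and restricting the supremum to those does not compute the operator norms of $A$ and $B$. Since you explicitly favour the first argument, this does not affect your proof, but the alternative route as written should be dropped or repaired.
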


\begin{proof}
From the triangle inequality we obtain
$$
\|A\|_{\B(\HH)}=
\left\|\tfrac{A+B}{2}+\tfrac{A-B}{2}\right\|_{\B(\HH)}
\leq \tfrac{1}{2}\|A+B\|_{\B(\HH)}+\tfrac{1}{2}\|A-B\|_{\B(\HH)} \;,
$$
and
$$
\|B\|_{\B(\HH)}=
\left\|\tfrac{A+B}{2}-\tfrac{A-B}{2}\right\|_{\B(\HH)}
\leq \tfrac{1}{2}\|A+B\|_{\B(\HH)}+\tfrac{1}{2}\|A-B\|_{\B(\HH)} \;.
$$
But $A-B=-\gamma(A+B)\gamma$ with $\gamma=\gamma^*$
unitary, so $\|A-B\|_{\B(\HH)}=\|A+B\|_{\B(\HH)}$. This proves
that
$\|A\|_{\B(\HH)}\leq \|A+B\|_{\B(\HH)}$
and
$\|B\|_{\B(\HH)}\leq \|A+B\|_{\B(\HH)}$,
i.e.~the inequality \eqref{eq:ineqA}.
\end{proof}

\begin{cor}\label{cor:ineq}
For any $a\in\A$, we have
\begin{equation}\label{eq:ineqC}
\max\big\{
\|[D_1\otimes 1,a]\|_{\B(\HH)},\|[\gamma_1\otimes D_2,a]\|_{\B(\HH)}\big\}\leq \|[D,a]\|_{\B(\HH)}^2 \;.
\end{equation}
\end{cor}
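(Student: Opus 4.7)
The plan is to apply Lemma \ref{lemma:sqrt2} to the natural splitting of $[D,a]$ into its two summands. Since $D=D_1\otimes 1+\gamma_1\otimes D_2$ and the commutator is linear in its second argument, one has the decomposition
$$
[D,a]=A+B, \qquad A:=[D_1\otimes 1,a], \qquad B:=[\gamma_1\otimes D_2,a].
$$
The grading on $\HH=\HH_1\otimes\HH_2$ that makes this decomposition into odd-plus-even is $\Gamma:=\gamma_1\otimes 1$.

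Verifying that $\Gamma$ is a grading is immediate from $\gamma_1^*=\gamma_1$ and $\gamma_1^2=1$. To see that $A$ is odd with respect to $\Gamma$ and $B$ is even, I would invoke the three defining properties of the even spectral triple $(\A_1,\HH_1,D_1,\gamma_1)$: namely $D_1\gamma_1=-\gamma_1 D_1$, $\gamma_1^2=1$, and $[\gamma_1,\A_1]=0$. The last of these extends to $[\Gamma,a]=0$ for every $a\in\A_1\otimes\A_2$, and then a one-line direct computation in the tensor product gives $\Gamma A=-A\Gamma$ while $\Gamma B=B\Gamma$.

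Once these parities are in place, Lemma \ref{lemma:sqrt2} applies verbatim with this choice of grading and yields
$$
\max\bigl\{\|A\|_{\B(\HH)},\|B\|_{\B(\HH)}\bigr\}\leq\|A+B\|_{\B(\HH)}=\|[D,a]\|_{\B(\HH)},
$$
which is the content of the corollary (the exponent $2$ appearing on the right-hand side of the displayed statement is, as best I can tell, a typographical slip for the first power). There is no real obstacle here: the corollary is essentially a repackaging of the preceding lemma, and the only point requiring attention is keeping track of parities under the fact that only the first of the two spectral triples is assumed to be even, so that one must work with the grading $\gamma_1\otimes 1$ rather than a full product grading.
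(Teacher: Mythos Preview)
Your proposal is correct and follows exactly the paper's own approach: the paper's proof consists of the single sentence ``Apply Lemma~\ref{lemma:sqrt2} with $A=[D_1\otimes 1,a]$, $B=[\gamma_1\otimes D_2,a]$ and $\gamma=\gamma_1\otimes 1$,'' which is precisely your decomposition and choice of grading. Your explicit verification of the parities and your observation that the exponent $2$ on the right-hand side is a typographical slip are both accurate additions.
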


\begin{proof}
Apply  Lemma \ref{lemma:sqrt2} with $A=[D_1\otimes 1,a]$, $B=[\gamma_1\otimes D_2,a]$
and $\gamma=\gamma_1\otimes 1$.
\end{proof}

\smallskip

\noindent
\textbf{Proof of Theorem \ref{thm1}, point (i).}
Let $\varphi=\varphi_1\otimes\varphi_2$ and
$\varphi'=\varphi_1'\otimes\varphi_2'$ be
two separable states.
Any $a\in\A$ can be written as $a=\sum_ia_1^i\otimes a_2^i$.
Notice that
\begin{align*}
\varphi(a)-\varphi'(a)
&=\sum\nolimits_i\varphi_1(a_1^i)\varphi_2(a_2^i)-\varphi_1'(a_1^i)\varphi_2'(a_2^i)
\notag\\
&=\sum\nolimits_i\big\{\varphi_1(a_1^i)-\varphi_1'(a_1^i)\big\}\varphi_2(a_2^i)
+\varphi_1'(a_1^i)\big\{\varphi_2(a_2^i)-\varphi_2'(a_2^i)\big\}
\notag\\
&=\varphi_1(a_1)-\varphi_1'(a_1)+\varphi_2(a_2)-\varphi_2'(a_2)
\notag\\
&\leq d_1(\varphi_1,\varphi_1')\|[D_1,a_1]\|_{\B(\HH_1)}
+d_2(\varphi_2,\varphi_2')\|[D_2,a_2]\|_{\B(\HH_2)} \;,
\end{align*}
where we used the linearity of states and called
\begin{equation*}
a_1=\sum\nolimits_ia_1^i\varphi_2(a_2^i)\in\A_1 \;,\qquad
a_2=\sum\nolimits_ia_2^i\varphi_1'(a_1^i)\in\A_2 \;.
\end{equation*}
Using Lemma \ref{lemma:pos} we deduce that
$$
\varphi(a)-\varphi'(a)
\leq d_1(\varphi_1,\varphi_1')\|[D_1\otimes 1,a]\|_{\B(\HH)}
+d_2(\varphi_2,\varphi_2')\|[\gamma_1\otimes D_2,a]\|_{\B(\HH)} \;.
$$
By \eqref{eq:ineqC} we get
$$
\varphi(a)-\varphi'(a)
\leq \big\{d_1(\varphi_1,\varphi_1')
+d_2(\varphi_2,\varphi_2')\big\}\|[D,a]\|_{\B(\HH)} \;,
$$
and taking the sup over $\A$ with $\|[D,a]\|_{\B(\HH)}\leq 1$
we get \eqref{eq:mainA}.
\qed

\bigskip

We remark that, unlike \eqref{eq:mainB}, \eqref{eq:mainA} and \eqref{eq:sqrtoftwo}
are valid for arbitrary (not necessarily unital) spectral triples. In
\S\ref{sec:4} and \S\ref{sec:5} we give two counterexamples to \eqref{eq:mainB}
using non-unital spectral triples. In the next section
we make a short digression to explain the importance of this counterexamples,
arising in the study of K-homology.

%%% ======================================================================
\section{Interlude on the one-point and two-point spaces}\label{sec:int}

\subsection{K-homology of $\C$}

An even pre-Fredholm module $(\A,\HH,F,\gamma)$ over a $*$-algebra $\A$ is given by a $\Z_2$-graded Hilbert
space $\HH$ with grading $\gamma$, a representation $\pi:\A\to\B(\HH)$, commuting with the grading, 
and a bounded operator $F$ anticommuting with the grading, such that
\mbox{$\pi(a)(F-F^*)$}, $\pi(a)(F^2-1)$ and $[F,\pi(a)]$ are compact operators for all $a\in\A$%
\footnote{Here we adopt the terminology of \cite[\S8.2]{GVF01}.
In \cite{BJ83,HR00} pre-Fredholm modules are called Fredholm modules `tout court'.}.
With a suitable equivalence relation, classes of even
pre-Fredholm modules form the zeroth K-homology group $K^0(\A)$, see e.g.~\cite[\S8.2]{HR00}.

Given a spectral triple $(\A,\HH,D,\gamma)$, a pre-Fredholm module $(\A,\HH,F,\gamma)$ can be obtained
by replacing $D$ with $F=D(1+D^2)^{-\frac{1}{2}}$. Vice versa, any K-homology class
has a representative that arises from a spectral triple through this exact construction \cite{BJ83}.

We recall that in any K-homology class one can find a representative such that $F=F^*$
and $F^2=1$: this will be called a Fredholm module \cite[\S8.2]{GVF01}. A Fredholm
module is called $1$-summable if $[F,\pi(a)]$ is of trace class for all $a\in\A$ \cite{Con94}.

Clearly, if $\HH$ is finite-dimensional, any Fredholm module $(\A,\HH,F,\gamma)$
is $1$-summable and it is also a spectral triple (the resolvent and bounded commutator
conditions are trivially satisfied). There is a pairing between K-homology and K-theory,
that for $1$-summable even Fredholm modules is given by
\begin{equation}\label{eq:pair}
\inner{[F],[p]}=\tfrac{1}{2}\tr_{\HH\otimes\C^n}(\gamma F[F,\pi(p)]) \;,
\end{equation}
where: $p=p^2=p^*\in M_n(\A)$ is a projection, representative of an element $[p]$
in the K-theory group $K_0(\A)$, $[F]$ is the class of the $1$-summable Fredholm module
$(\A,\HH,F,\gamma)$, $\pi$ is extended to a representation of $M_n(\A)$
on $\HH\otimes\C^n$ in the obvious way, and $\tr_{\HH\otimes\C^n}$ is the trace on $\HH\otimes\C^n$.
Using \eqref{eq:pair} any Fredholm module corresponds to a linear map:
\begin{equation}\label{eq:chern}
\mathrm{ch}_F:K_0(\A)\to\C \;,\qquad
\mathrm{ch}_F([p]):=\inner{[F],[p]}  \;,
\end{equation}
usually called Chern-Connes character.

\medskip

Suppose we are interested in Fredholm modules for the algebra $\C$, i.e.~functions on
the space with one point.
For any $z\in\C$ one has $\pi(z)=z\pi(1)$, and if the representation is unital,
{then} $\pi(1)$ commutes with any operator $F$, {so that the} Chern-Connes character
\eqref{eq:chern} is identically zero. To get a {non-zero Chern-Connes character},
we need to use a representation $\pi$ that is not unital.

A non-trivial Fredholm module $(\C,\widetilde{\HH},\widetilde{F},\widetilde{\gamma})$ is given by
$\widetilde{\HH}=\C^2$, with representation, operator $\widetilde{F}$ and grading given by:
$$
\widetilde{\pi}(z)=\mat{z & 0 \\ 0 & 0} \;,\qquad
\widetilde{F}=\mat{0 & 1 \\ 1 & 0} \;,\qquad
\widetilde{\gamma}=\mat{1 & \;0 \\ 0 & \!\!-1} \;.
$$
Note that
$$
\tr_{\C^2}(\widetilde{\gamma}\widetilde{F}[\widetilde{F},\widetilde{\pi}(z)])=2z \;.
$$
Given an element $[p]\in K_0(\C)$, from \eqref{eq:pair} we get
$$
\inner{[\widetilde{F}],[p]}
=\tfrac{1}{2}\tr_{\C^2}(\widetilde{\gamma}\widetilde{F}[\widetilde{F},\textstyle{\sum_i}\widetilde{\pi}(p_{ii})])
=\textstyle{\sum_i}p_{ii} \;.
$$
This is exactly the rank of $p$. It is well known that the above Fredholm module generates
$K^0(\C)\simeq\Z$ (any other Fredholm module is equivalent to a multiple of this one).

\subsection[Pull-backs and products]{Pull-back of Fredholm modules/``amplification'' of spectral triples}
Here we describe two ways to construct Fredholm modules or spectral triples on an algebra $\A$
using the basic Fredholm module $(\C,\widetilde{\HH},\widetilde{F},\widetilde{\gamma})$ of
previous section. These will be applied then to the study of the algebra $\C^2$.

\subsubsection{Pull-back}\label{sec:pull}

Given a connected locally compact Hausdorff space $X$, we can use an irreducible representation ---
i.e.~a map $C_0(X)\to\C$, $f\mapsto f(x)$, with $x\in X$ --- to obtain a Fredholm
module over $C_0(X)$ as a pullback of the Fredholm module over $\C$ given above:
the corresponding Chern-Connes map, evaluated on a projection, gives the rank
of the corresponding vector bundle. This is true even for some quantum spaces,
for example quantum complex projective spaces \cite{DL09a}.
For $X$ compact, one of the generators of $K_0(C(X))$ is the trivial projection $p=1$
(the constant function); to get a full set of generators of $K^0(C(X))$ one is forced
to use the construction above, as any Fredholm module with a unital representation
will have a trivial pairing with $p=1$.

\smallskip

More generally if $\A$ is any associative involutive complex algebra, one can use a
one-dimensional irreducible representation $\chi:\A\to\C$ (if any) to pull-back the Fredholm module
$(\C,\widetilde{\HH},\widetilde{F},\widetilde{\gamma})$.
The result is a Fredholm module $(\A,\widetilde{\HH},\widetilde{F},\widetilde{\gamma})$
with $\widetilde{\HH}$, $\widetilde{F}$, and $\widetilde{\gamma}$ as above, and with
representation of $\A$ on $\widetilde{\HH}$ given by $\widetilde{\pi}\circ\chi$:
$$
\widetilde{\pi}\circ\chi(a)=\mat{\chi(a) & 0 \\ 0 & 0} \;,\qquad
\forall\;a\in\A\;.
$$
This is what we do, for example, for quantum complex projective spaces or for
the standard Podle\'s sphere, to get the last generator of the K-homology \cite{DL09a}.

\subsubsection{Amplification}\label{sec:amp}

As explained, the Fredholm module $(\C,\widetilde{\HH},\widetilde{F},\widetilde{\gamma})$
above is also an even spectral triple (over the space with one point), and given any other
spectral triple $(\A,\pi_0,\HH_0,D_0)$ we can form their product, that we denote by $(\A,\pi,\HH,D)$
(we explicitly indicate the representation symbols).
Clearly we are not changing the algebra: $\A\otimes\C\simeq\A$. On the other hand,
$\HH=\HH_0\otimes\C^2$, and the representation and Dirac operator are given by:
$$
\pi(a)=\mat{\pi_0(a) & 0 \\ 0 & 0} \;,\qquad
D=\mat{D_0 & 1 \\ 1 & \!\!-D_0 } \;.
$$
If the former spectral triple is even, with grading $\gamma_0$, the latter is even too, with
grading
$$
\gamma=\mat{\gamma_0 & 0 \\ \;0 & \!\!-\gamma_0} \;.
$$
The advantage is that one can start from a spectral triple that is trivial in K-homology,
and get a new spectral triple with a non-trivial K-homology class.

\subsection{K-homology of $\C^2$}\label{sec:Ctwo}
The K-theory and K-homology of $\C^2=\C\oplus\C$ are well known (here we denote elements as
pairs $(a,b)$, instead of writing $a\oplus b$).
We know that the K-theory is $K_0(\C\oplus\C)\simeq\Z\oplus\Z$ (see e.g.~Exercise 6.I(h) of \cite{WO93}),
with generators given by
$$
p_+:=(1,0)
 \;,\qquad
p_-:=(0,1)
 \;.
$$
(We are not interested in $K_1$, that in this example is zero anyway.)

\smallskip

Concerning K-homology, using the two irreducible representations of $\C^2$, given
by the two pure states $\varphi_+(a,b)=a$ and $\varphi_-(a,b)=b$, we
can get two Fredholm modules $(\C^2,\HH_+,F_+,\gamma_+)$
and $(\C^2,\HH_-,F_-,\gamma_-)$, as explained in \S\ref{sec:pull}.
We have $\HH_+=\HH_-=\C^2$,
$$
F_+=F_-=\mat{0 & 1 \\ 1 & 0} \;,\qquad
\gamma_+=\gamma_-=\mat{1 & \;0 \\ 0 & \!\!-1} \;,
$$
and the only difference is in the representation
$$
\pi_+(a,b)=\mat{a & 0 \\ 0 & 0} \;,\qquad
\pi_-(a,b)=\mat{b & 0 \\ 0 & 0} \;.
$$
Using \eqref{eq:chern} one checks
that the pairing with K-theory is $\inner{[F_i],[p_j]}=\delta_{ij}$,
proving that we have a dual pair of generators of K-theory and K-homology.

\smallskip

For metric purposes, these two Fredholm modules are not very interesting
since the corresponding spectral distance between pure states is infinite (in
both cases we have elements not proportional to $1$ commuting with the `Dirac'
operator $F$: those in the kernel of the representation). We now describe two
spectral triples that are more suitable for metric purposes.

The first unital spectral triple $(\A_1,\pi_1,\HH_1,D_1,\gamma_1)$ is given by
$\A_1=\HH_1=\C^2$, with
$$
\pi_1(a,b)=\mat{a & 0 \\ 0 & b} \;,\qquad
D_1=\frac{1}{\lambda}\,\mat{0 & 1 \\ 1 & 0} \;,\qquad
\gamma_1=\mat{1 & 0 \\ 0 & -1} \;,
$$
with $\lambda>0$ a fixed scale.
The distance between the two pure states of the algebra is easily computed,
and given by (see e.g.~page 35 of \cite{Con94}):
\begin{equation}\label{eq:lambdad}
d_{\A_1,D_1}(\varphi_0,\varphi_1)=\lambda \;.
\end{equation}
The corresponding Fredholm module is given by $F_1=\lambda D_1$.

Another even Fredholm module $(\A_2,\pi_2,\HH_2,F_2,\gamma_2)$
is obtained as the ``amplification'' of $(\A_1,\pi_1,\HH_1,0,I_2)$ (note that the grading $I_2$
anticommutes with the zero Dirac operator), as explained in
\S\ref{sec:amp}. The result is $\HH_2=\C^4$, with grading $\gamma_2=\mathrm{diag}(1,1,-1,-1)$,
and with (non-unital) $*$-representation $\pi_2:\C^2\to M_4(\C)$ and $F_2$ given by:
$$
\pi_2(a,b)=\left[\begin{array}{cccc}
a & 0 & 0 & 0 \\ 0 & b & 0 & 0 \\ 0 & 0 & 0 & 0 \\ 0 & 0 & 0 & 0
\end{array}\right] \;,\qquad\quad
F_2=\left[\begin{array}{cccc}
0 & 0 & 1 & 0 \\ 0 & 0 & 0 & 1 \\ 1 & 0 & 0 & 0 \\ 0 & 1 & 0 & 0
\end{array}\right] \;.
$$
For the Dirac operator, it is convenient to choose the normalization $D_2=2\mu^{-1}F_2$, with
$\mu>0$ a fixed length scale.
One easily checks that $\|[D_2,\pi_2(a,b)]\|=2\mu^{-1}\max\{|a|,|b|\}$.
From this, it follows that
\begin{equation}\label{eq:mud}
d_{\A_2,D_2}(\varphi_0,\varphi_1)=\mu \;.
\end{equation}
The Dirac operators $D_1$ and $D_2$ correspond to geometries that are ``topologically''
inequivalent, i.e.~to different classes in $K^0(\C^2)$. More precisely, computing the pairing
with the projections $p_+$ and $p_-$ one proves the following relations with the generators of $K^0(\C^2)$:
$$
[F_1]=[F_+]-[F_-] \;,\qquad
[F_2]=[F_+]+[F_-] \;.
$$

%%% ======================================================================
\section{The importance of being non-degenerate}\label{sec:deg}
The proof of \eqref{eq:mainB} works only for unital spectral triples.
In \S\ref{sec:4} and \S\ref{sec:5} we show what happens if one of the two spectral triples
is not unital: in the former case the algebra is unital but the representation
is not, in the latter case the algebra is itself non-unital.
In both cases the inequality \eqref{eq:mainB} is not true (it is violated
already by pure states).

\subsection{A product of two-point spaces}\label{sec:4}
Here we consider the the product $(\A,\pi,\HH,D)$  of the spectral triples
$(\A_1,\pi_1,\HH_1,D_1,\gamma_1)$ and
$(\A_2,\pi_2,\HH_2,D_2)$ over the algebra $\A_1=\A_2=\C^2$ introduced
in \S\ref{sec:Ctwo}. From \eqref{eq:lambdad} and \eqref{eq:mud} we have:
$$
d_{\A_1,D_1}(\varphi_+,\varphi_-)=\lambda \;,\qquad
d_{\A_2,D_2}(\varphi_+,\varphi_-)=\mu \;,
$$
where $\varphi_+(a,b)=a$ and $\varphi_-(a,b)=b$ are the two pure states of $\C^2$.
If \eqref{eq:mainA} were true in the non-unital case, we would expect
$d_{\A,D}(\varphi_+\otimes\varphi_+,\varphi_-\otimes\varphi_-)\geq\sqrt{\lambda^2+\mu^2}$.
The next proposition shows that this is not the case.

\begin{prop}\label{prop:indep}
The distance $d_{\A,D}(\varphi_+\otimes\varphi_+,\varphi_-\otimes\varphi_-)=\mu$
is independent of $\lambda$.
\end{prop}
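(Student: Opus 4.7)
My plan is to establish the equality by separately proving $d \geq \mu$ and $d \leq \mu$. The lower bound will be easy using the unitality of $\A_1$, whereas the upper bound is the real content of the statement: notice that applying Theorem~\ref{thm1} gives only $d \leq \lambda+\mu$, which does depend on $\lambda$, so the task is to find a sharper bound specific to this example.

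For the lower bound I would take $a = 1_{\A_1} \otimes a_2$, where $a_2 = (\mu/2,-\mu/2) \in \A_2$ saturates $d_{\A_2,D_2}(\varphi_+,\varphi_-) = \mu$. Since $\pi_1(1_{\A_1}) = 1$, the commutator $[D_1 \otimes 1,\pi(a)]$ vanishes and so $[D,\pi(a)] = \gamma_1 \otimes [D_2,\pi_2(a_2)]$, whence $\|[D,\pi(a)]\|_{\B(\HH)} = \|[D_2,\pi_2(a_2)]\|_{\B(\HH_2)} \leq 1$. Pairing against the state difference yields exactly $\mu$.

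For the upper bound I would write $a = \sum_{i,j \in \{+,-\}} a_{ij}\, p_i \otimes p_j$, where $p_\pm$ are the minimal projections of $\C^2$ and $a_{ij} \in \R$ by self-adjointness, so that $(\varphi_+\otimes\varphi_+ - \varphi_-\otimes\varphi_-)(a) = a_{++}-a_{--}$. It then suffices to show $|a_{++}|, |a_{--}| \leq \mu/2$ whenever $\|[D,\pi(a)]\|_{\B(\HH)} \leq 1$. The key step is to test $[D,\pi(a)]$ against the unit vector $w = f_+ \otimes g_1 \in \HH_1 \otimes \HH_2$, where $f_\pm$ is the canonical basis of $\HH_1=\C^2$ and $g_1,\dots,g_4$ of $\HH_2=\C^4$. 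Using $D_1 f_+ = \lambda^{-1} f_-$, $\gamma_1 f_+ = f_+$, $D_2 g_1 = 2\mu^{-1} g_3$, and the crucial identity $\pi_2(b,c)\, g_3 = 0$ for all $(b,c) \in \C^2$ (the manifestation of non-unitality), a direct computation gives
$$
[D,\pi(a)]\, w = \lambda^{-1}(a_{++} - a_{-+})\, f_- \otimes g_1 + 2\mu^{-1}\, a_{++}\, f_+ \otimes g_3 \;.
$$
The two summands being orthogonal, $\|[D,\pi(a)]\, w\|_{\HH}^2 \geq 4\mu^{-2}\, a_{++}^2$, so $|a_{++}| \leq \mu/2$. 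A symmetric computation with $w = f_- \otimes g_2$ will give $|a_{--}| \leq \mu/2$.

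The main obstacle is conceptual rather than technical: one must resist the temptation to apply Theorem~\ref{thm1} or to block-diagonalize $[D,\pi(a)]$ and grind out its full operator norm (which is feasible but messy). The cleaner path is to pick test vectors for which one output component of $[D,\pi(a)] w$ carries exactly the coefficient to be bounded, with no cross-contamination from $\pi_2$; it is precisely the vanishing of $\pi_2$ on the vectors $g_3$ and $g_4$ that decouples the $2\mu^{-1}$-contribution from the potentially large $\lambda^{-1}$-contribution and produces a $\lambda$-independent bound, exposing why unitality was essential in the proof of~\eqref{eq:mainB}.
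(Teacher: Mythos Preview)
Your proof is correct and rests on the same core idea as the paper's: bound $|a_{++}|$ and $|a_{--}|$ separately by $\mu/2$ via the elementary inequality $|\langle v,[D,\pi(a)]w\rangle|\leq\|[D,\pi(a)]\|$ for unit vectors $v,w$, and for the lower bound use the very same element (your $1_{\A_1}\otimes(\mu/2,-\mu/2)$ is exactly the paper's choice $a_1=-a_2=a_3=-a_4=\mu/2$).

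The difference is purely presentational. The paper writes everything as an explicit $8\times 8$ matrix, applies a permutation to cast $[D,\pi(a)]$ in block off-diagonal form $\bigl[\begin{smallmatrix}0&-B_a\\B_a^*&0\end{smallmatrix}\bigr]$, and then reads off the diagonal entries $b_{11}=2\mu^{-1}a_{++}$ and $b_{44}=2\mu^{-1}a_{--}$ of $B_a$. You instead stay in tensor notation and compute $[D,\pi(a)]w$ directly on the test vectors $f_+\otimes g_1$ and $f_-\otimes g_2$; your extraction of the $f_+\otimes g_3$ (resp.~$f_-\otimes g_4$) component is precisely the computation of those same matrix entries. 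What your route buys is that it makes the mechanism transparent without any matrix bookkeeping: the identity $\pi_2(\cdot)\,g_3=\pi_2(\cdot)\,g_4=0$ is visibly the reason the $\lambda^{-1}$-term decouples, which is exactly the failure of unitality. The paper's explicit matrix form, on the other hand, makes it easier to reuse the computation for the subsequent proposition on $d(\varphi_+\otimes\varphi_+,\varphi_-\otimes\varphi_+)$.
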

\begin{proof}
Recall that $\A=\A_1\otimes\A_2$, $\HH=\HH_1\otimes\HH_2$ and $D=D_1\otimes1+\gamma_1\otimes D_2$.
If $\{e_i\}_{i=1}^n$ is the canonical orthonormal basis of $\C^n$, a unitary map
$U:\HH=\C^2\otimes\C^4\to\C^8$ is defined by
$U(e_1\otimes e_i)=e_i$ and $U(e_2\otimes e_1)=e_{i+4} \;\forall\;i=1,\ldots,4$.

An isomorphism $\imath:\A=\C^2\otimes\C^2\to\C^4$ is given by
$$
\imath\big((a_1,a_2)\otimes (b_1,b_2)\big):=(a_1b_1,a_1b_2,a_2b_1,a_2b_2) \;.
$$
The states $\varphi_+\otimes\varphi_+$ and $\varphi_-\otimes\varphi_-$ can be pulled-back
to states on $\C^4$ given by
$$
(\varphi_+\otimes\varphi_+)\circ\imath^{-1}(a_1,\ldots,a_4)=a_1 \;,\qquad
(\varphi_-\otimes\varphi_-)\circ\imath^{-1}(a_1,\ldots,a_4)=a_4 \;.
$$
The representation $\pi_1\otimes\pi_2$ gives the following representation
$\pi(a):=U\big((\pi_1\otimes\pi_2)\imath^{-1}(a)\big)U^*$ of $a=(a_1,\ldots,a_4)\in\C^4$
that is explicitly given by:
$$
\pi(a_1,\ldots,a_4)=\mathrm{diag}(a_1,a_2,0,0,a_3,a_4,0,0) \;,
$$
and the Dirac operator becomes the $8\times 8$ matrix
$$
D=\mat{D_2 & \lambda^{-1}I_4 \\ \lambda^{-1}I_4 & -D_2} \;,
$$
where $I_4$ is the $4\times 4$ identity matrix.
The distance $d_{\A,D}(\varphi_+\otimes\varphi_+,\varphi_-\otimes\varphi_-)$
is then the supremum of $a_1-a_4$ over $a=(a_1,\ldots,a_4)\in\R^4$ (it is enough
to consider self-adjoint elements) with the condition $\|[D,\pi(a)]\|\leq 1$.
Applying the permutation 
$\binom{1\;2\;3\;4\;5\;6\;7\;8}{1\;2\;7\;8\;3\;4\;5\;6}$ to rows and columns
of $[D,\pi(a)]$, one gets the matrix
$$
\mat{0_4 & \!\!-B_a \\ B_a^* & 0_4} \;,\qquad\mathrm{with}\quad
B_a:=
{\frac{1}{\mu}}\left[\begin{array}{cccc}
2a_1 & 0 & {\mu}\lambda^{-1}(a_1-a_3) & 0 \\ 0 & 2a_2 & 0 & {\mu}\lambda^{-1}(a_2-a_4) \\ 0 & 0 & 2a_3 & 0 \\ 0 & 0 & 0 & 2a_4
\end{array}\right] \;.
$$
Since permutation matrices are unitary, $\|[D,\pi(a)]\|=\|B_a\|$.
Denoting by $b_{ij}$ the matrix elements of $B_a$,
since $b_{ij}\leq\|B_a\|$, we have
$$
a_1-a_4=\tfrac{\mu}{2}(b_{11}-b_{44})\leq \mu\|B_a\|= \mu\|[D,\pi(a)]\|  \;,
$$
proving that the distance is no greater than $\mu$.
If $a_1=-a_2=a_3=-a_4={\mu}/2$, then
$B_a=\mathrm{diag}(1,-1,1,-1)$ has norm $1$, so that the distance is no less than $a_1-a_4=\mu$.
\end{proof}

\begin{cor}
For $\lambda/\mu\geq 0$ ($\mu\neq 0$), the ratio
$$
k_\lambda:=
\frac{d_{\A,D}(\varphi_+\otimes\varphi_+,\varphi_-\otimes\varphi_-)}{
\sqrt{d_{\A_1,D_1}(\varphi_+,\varphi_-)^2+d_{\A_2,D_2}(\varphi_+,\varphi_-)^2}}
=\frac{\mu}{\sqrt{\lambda^2+\mu^2}}
$$
assumes all the values in the interval $(0,1]$ (compare this with the
situation in \S\ref{sec:2.2}).
\end{cor}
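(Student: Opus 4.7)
The plan is essentially to substitute three distances that are already computed in the paper. From equation \eqref{eq:lambdad} we have $d_{\A_1,D_1}(\varphi_+,\varphi_-)=\lambda$, from equation \eqref{eq:mud} we have $d_{\A_2,D_2}(\varphi_+,\varphi_-)=\mu$, and Proposition \ref{prop:indep} (proved just above) gives $d_{\A,D}(\varphi_+\otimes\varphi_+,\varphi_-\otimes\varphi_-)=\mu$. Plugging these values into the definition of $k_\lambda$ yields directly
$$
k_\lambda=\frac{\mu}{\sqrt{\lambda^2+\mu^2}} \;,
$$
which is the first assertion.

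For the range, I would introduce the ratio $t:=\lambda/\mu\in[0,\infty)$ and rewrite $k_\lambda=1/\sqrt{t^2+1}$. This expresses $k_\lambda$ as a continuous, strictly decreasing function of $t$ on $[0,\infty)$, taking the value $1$ at $t=0$ and approaching (but never attaining, since $\mu\neq 0$) the value $0$ as $t\to\infty$. The intermediate value theorem then gives that every value of the half-open interval $(0,1]$ is attained for some choice of $\lambda/\mu\geq 0$.

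I do not anticipate any real obstacle here, since the only nontrivial ingredient is Proposition \ref{prop:indep}, and the remaining content is an elementary analysis of the one-variable function $t\mapsto(1+t^2)^{-1/2}$. The only point worth flagging is the comparison with \S\ref{sec:2.2}: there $k_\lambda$ ranged over $[1,\sqrt{2}]$ in the commutative (unital) setting, whereas here, in the non-unital setting, $k_\lambda$ drops to the opposite interval $(0,1]$, illustrating that the lower Pythagoras bound \eqref{eq:mainB} fails dramatically without the unitality assumption.
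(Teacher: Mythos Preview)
Your proposal is correct and matches the paper's approach: the corollary is stated there without an explicit proof, as it follows immediately from substituting \eqref{eq:lambdad}, \eqref{eq:mud}, and Proposition~\ref{prop:indep} into the definition of $k_\lambda$, together with the elementary observation about the range of $\mu/\sqrt{\lambda^2+\mu^2}$. Your added remark comparing with \S\ref{sec:2.2} is also in the spirit of the paper's parenthetical comment.
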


Similarly to \eqref{eq:sqrtoftwo}, we could think of replacing \eqref{eq:mainB}
with a weaker inequality
\begin{equation}\label{eq:ineqSDaT}
d(\varphi,\varphi')\geq k
\sqrt{d_1(\varphi_1,\varphi'_1)^2+d_2(\varphi_2,\varphi'_2)^2} \;,
\end{equation}
for some $k\geq 0$. Last corollary shows that the only value of $k$ such that
\eqref{eq:ineqSDaT} is valid for all spectral triples is $k=0$.

\smallskip

Observe also that Cor.~\ref{cor9} is no longer valid in the non-unital case.
What spoils the proof is the ``$\geq$'' inequality.
Since what really matter is the ratio $\lambda/\mu$, from now on $\mu=1$.

\begin{prop}
If $\lambda>1$ (and $\mu=1$), then
$
d_{\A,D}(\varphi_+\otimes\varphi_+,\varphi_-\otimes\varphi_+)<d_{\A_1,D_1}(\varphi_+,\varphi_-) .
$
\end{prop}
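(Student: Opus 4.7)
The plan is to reuse the explicit matrix realization set up in the proof of Proposition~\ref{prop:indep}. Under the same isomorphism $\imath:\A\to\C^4$ and unitary $U$, the states $\varphi_+\otimes\varphi_+$ and $\varphi_-\otimes\varphi_+$ become the coordinate evaluations $a\mapsto a_1$ and $a\mapsto a_3$ on self-adjoint $a=(a_1,a_2,a_3,a_4)\in\R^4$, while $\|[D,\pi(a)]\|$ remains equal to $\|B_a\|$, with $B_a$ the $4\times 4$ matrix displayed there (specialized to $\mu=1$). The distance in question thus becomes
$$
d_{\A,D}(\varphi_+\otimes\varphi_+,\varphi_-\otimes\varphi_+)=\sup\big\{a_1-a_3\;:\;a\in\R^4,\;\|B_a\|\leq 1\big\}.
$$

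The key step is then an elementary diagonal estimate. The four diagonal entries of $B_a$ are $2a_1,2a_2,2a_3,2a_4$, and every matrix entry is bounded in absolute value by the operator norm, i.e.~$|2a_i|=|\inner{e_i,B_ae_i}|\leq\|B_a\|$. The constraint $\|B_a\|\leq 1$ therefore forces $|a_i|\leq\tfrac{1}{2}$ for $i=1,\ldots,4$, so that
$$
a_1-a_3\leq|a_1|+|a_3|\leq 1.
$$
This yields the $\lambda$-independent upper bound $d_{\A,D}(\varphi_+\otimes\varphi_+,\varphi_-\otimes\varphi_+)\leq 1$.

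Combining this bound with $d_{\A_1,D_1}(\varphi_+,\varphi_-)=\lambda$ from \eqref{eq:lambdad} and with the hypothesis $\lambda>1$ gives the desired strict inequality. I expect no genuine obstacle once Proposition~\ref{prop:indep} is at our disposal: the whole point is that the first-factor information enters $B_a$ only off-diagonally, scaled by $\lambda^{-1}$, and is therefore unable to relax the diagonal bound $|a_i|\leq\tfrac{1}{2}$. The mixed distance is consequently capped at $\mu=1$ no matter how large $\lambda$ becomes, which is precisely the mechanism by which Corollary~\ref{cor9} breaks down on this non-unital example.
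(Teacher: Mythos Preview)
Your argument is correct and in fact more direct than the paper's. You reuse the identity $|b_{ii}|\leq\|B_a\|$ for diagonal entries (the very same trick the paper employs in the proof of Proposition~\ref{prop:indep} to bound $a_1-a_4$) and apply it to $a_1$ and $a_3$, obtaining the $\lambda$-independent bound $d_{\A,D}(\varphi_+\otimes\varphi_+,\varphi_-\otimes\varphi_+)\leq 1$. The paper instead takes a detour through the row- and column-sum norms $\|B\|_\infty,\|B\|_1\leq\sqrt{n}\,\|B\|$ and a rewriting of $a_1-a_3$ to arrive at the weaker bound $d_{\A,D}(\varphi_+\otimes\varphi_+,\varphi_-\otimes\varphi_+)\leq \tfrac{2\lambda}{1+\lambda}$, which tends to $2$ as $\lambda\to\infty$ (this is the content of the remark following the proposition). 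Both bounds do the job once $\lambda>1$, but yours is sharper and more transparent, and it makes the parallel with Proposition~\ref{prop:indep} explicit: the passage from $(\varphi_+\otimes\varphi_+,\varphi_-\otimes\varphi_-)$ to $(\varphi_+\otimes\varphi_+,\varphi_-\otimes\varphi_+)$ merely swaps one diagonal entry for another.
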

\begin{proof}
The distance between $\varphi_+\otimes\varphi_+$ and $\varphi_-\otimes\varphi_+$
is the supremum of $a_1-a_3$ (over $a_1,\ldots,a_4\in\R$) with the constraint that
$\|B_a\|\leq 1$. Note that
\begin{equation}\label{eq:aoneathree}
a_1-a_3
=\tfrac{\lambda}{1+\lambda}\big\{(a_1-a_3)+\lambda^{-1}(a_1-a_3)\big\}
\leq \tfrac{\lambda}{1+\lambda}\big\{|a_1|+|a_3|+\lambda^{-1}|a_1-a_3| \big\} \;.
\end{equation}
Every $n\times n$ matrix $B$ satisfies (cf.~equations (2.3.11) and (2.3.12) of \cite{GvL96}):
$$
\|B\|_\infty:=\max_{1\leq i\leq n}\sum\nolimits_{j=1}^n|b_{ij}|\leq \sqrt{n}\|B\|
\quad\mathrm{and}\quad
\|B\|_1:=\max_{1\leq j\leq n}\sum\nolimits_{i=1}^n|b_{ij}|\leq \sqrt{n}\|B\| \;.
$$
In our case $n=4$, and looking at the third column resp.~first row we get:
$$
2|a_3|+\lambda^{-1}|a_1-a_3|\leq \|B_a\|_\infty\leq 2\|B_a\| \;,\qquad
2|a_1|+\lambda^{-1}|a_1-a_3|\leq \|B_a\|_1\leq 2\|B_a\| \;.
$$
Thus
$
\,|a_1|+|a_3|+\lambda^{-1}|a_1-a_3|\leq 2\|B_a\|\,
$
and by \eqref{eq:aoneathree}:
$
\,a_1-a_3\leq \frac{2\lambda}{1+\lambda} \|B_a\|\, .
$
This proves that
$d_{\A,D}(\varphi_+\otimes\varphi_+,\varphi_-\otimes\varphi_+)\leq \frac{2\lambda}{1+\lambda}$.
On the other hand $d_{\A_1,D_1}(\varphi_+,\varphi_-)=\lambda$, and
$$
\frac{2\lambda}{1+\lambda}<\lambda
$$
for every $\lambda>1$. This concludes the proof.
\end{proof}

\begin{rem}
For $\lambda\to\infty$, $d_{\A_1,D_1}(\varphi_+,\varphi_-)$ diverges while
$d_{\A,D}(\varphi_+\otimes\varphi_+,\varphi_-\otimes\varphi_+)\leq 2$.
\end{rem}

%%% ======================================================================
\subsection{Two-sheeted real line}\label{sec:5}
In this example, the first (unital) spectral triple
$(\A_1,\pi_1,\HH_1,D_1,\gamma_1)$ is the one introduced
in \S\ref{sec:Ctwo}, depending on a scale $\lambda>0$.
The second (non-unital) spectral triple
$(\A_2,\pi_2,\HH_2,F_2)$ is obtained as the ``amplification'',
cf.~\S\ref{sec:amp}, of the canonical spectral triple of the real line:
$$
\big(C_0^\infty(\R),L^2(\R),\tfrac{1}{2}\D\big)\;,\qquad\D:=i\tfrac{\de}{\de x} \;.
$$
(The normalization $1/2$ of the Dirac operator allows to simplify some formulas.)
The result is $\A_2=C_0^\infty(\R)$
with representation $\pi_2$ on $\HH_2=L^2(\R)\otimes\C^2$ 
Dirac operator $F_2$ given by:
$$
\pi_2(f)=\mat{f & 0 \\ 0 & 0} \;,\qquad\quad
F_2=\frac{1}{2}\mat{\D & \;2 \\ 2 & \!\!-\D} \;.
$$
where $f$ acts on $L^2(\R)$ by pointwise multiplication.
To get nicer formulas, we prefer to compute the distance using $D_2=2F_2$
rather than $F_2$.

Now, let $(\A,\HH,D)$ be the product of $(\A_1,\pi_1,\HH_1,D_1,\gamma_1)$
and $(\A_2,\pi_2,\HH_2,D_2)$.

\begin{prop}\label{prop20}
For any $x,y\in\R$ and $\lambda>0$, we have
$$
d_{\A,D}(\varphi_+\otimes\delta_x,\varphi_-\otimes\delta_y)
{\leq 1} \leq
\lambda^{-1}\sqrt{d_{\A_1,D_1}(\varphi_+,\varphi_-)^2+d_{\A_2,D_2}(\delta_x,\delta_y)^2} \;.
$$
Notice that for $\lambda\to\infty$, $d_{\A_1,D_1}(\varphi_+,\varphi_-)$ diverges while
$d_{\A,D}(\varphi_+\otimes\delta_x,\varphi_-\otimes\delta_x)$ is never greater than $1$. In particular, if $\lambda>1$ one has
$$
d_{\A,D}(\varphi_+\otimes\delta_x,\varphi_-\otimes\delta_x)\neq
d_{\A_1,D_1}(\varphi_+,\varphi_-) \;.
$$
\end{prop}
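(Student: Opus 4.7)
The second inequality $1 \leq \lambda^{-1}\sqrt{d_{\A_1,D_1}(\varphi_+,\varphi_-)^2+d_{\A_2,D_2}(\delta_x,\delta_y)^2}$ is immediate from \eqref{eq:lambdad}: since $d_{\A_1,D_1}(\varphi_+,\varphi_-)=\lambda$, the right-hand side equals $\lambda^{-1}\sqrt{\lambda^2+d_{\A_2,D_2}(\delta_x,\delta_y)^2}\geq 1$. The substantive content of the proposition is the upper bound $d_{\A,D}(\varphi_+\otimes\delta_x,\varphi_-\otimes\delta_y)\leq 1$, and my plan is to extract it from the block structure of the commutator $[D,\pi(a)]$.

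First, I would identify $\A=\A_1\otimes\A_2\simeq C_0^\infty(\R)\oplus C_0^\infty(\R)$, so that a self-adjoint element corresponds to a pair $(F_+,F_-)$ of real functions in $C_0^\infty(\R)$, with $(\varphi_+\otimes\delta_x-\varphi_-\otimes\delta_y)(F_+,F_-)=F_+(x)-F_-(y)$. Reorganizing $\HH=\C^2\otimes L^2(\R)\otimes\C^2$ as $L^2(\R)^{\oplus 4}$, indexed by the canonical tensor basis of $\C^2\otimes\C^2$, the representation takes the diagonal form $\pi(F_+,F_-)=\mathrm{diag}(F_+,0,F_-,0)$ (two blocks vanish precisely because $\pi_2$ is non-unital), while $D=D_1\otimes 1+\gamma_1\otimes D_2$ becomes an explicit $4\times 4$ matrix of operators on $L^2(\R)$, with diagonal entries $\pm\D$, off-diagonal constants $\pm 2$ coming from $D_2$, and constants $\lambda^{-1}$ coming from $D_1$.

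Next I would compute $[D,\pi(F_+,F_-)]$ entrywise. The decisive observation is that the $(2,1)$-block reduces to the pure multiplication operator $2F_+$ and the $(4,3)$-block to $-2F_-$: in each case one of the relevant diagonal entries of $\pi$ vanishes while the corresponding off-diagonal $D$-entry is the constant $\pm 2$, so the $\D$-contributions disappear. Using the elementary inequality $\|B_{ij}\|_{\B(\HH)}\leq\|B\|_{\B(\HH)}$ for blocks $B_{ij}=P_i B P_j$, together with $\|F_\pm\|_{\B(L^2(\R))}=\|F_\pm\|_\infty$ for multiplication operators, I would deduce
\[
2\|F_+\|_\infty,\; 2\|F_-\|_\infty \;\leq\; \|[D,\pi(F_+,F_-)]\|_{\B(\HH)}.
\]
Under the Lipschitz constraint $\|[D,\pi(F_+,F_-)]\|\leq 1$, this forces $\|F_\pm\|_\infty\leq 1/2$, hence
\[
F_+(x)-F_-(y)\leq|F_+(x)|+|F_-(y)|\leq \tfrac{1}{2}+\tfrac{1}{2}=1,
\]
and taking the supremum over admissible $(F_+,F_-)$ gives the desired bound.

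There is no serious obstacle: the estimate is essentially forced by the block structure once one writes $D$ and $\pi$ out in matrix form. The only conceptual point is why the argument succeeds here but fails in the unital setting of \S\ref{sec:3.3}. The answer is precisely that the zero diagonal blocks produced by the non-unital representation $\pi_2$ are what allow the off-diagonal constants of $D_2$ to survive into the commutator as naked multiplication operators, which in turn pins down $\|F_\pm\|_\infty$ and produces a $\lambda$-independent bound on the distance; this is exactly the mechanism by which Pythagoras inequality \eqref{eq:mainB} gets violated.
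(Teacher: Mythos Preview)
Your proposal is correct and follows essentially the same route as the paper: identify $\A$ with pairs $(f_+,f_-)\in C_0^\infty(\R)^{\oplus 2}$, write $\pi$ and $D$ as $4\times 4$ blocks on $L^2(\R)^{\oplus 4}$, read off the entries $2f_+$ and $-2f_-$ from the commutator, and bound $f_+(x)-f_-(y)\leq\|f_+\|_\infty+\|f_-\|_\infty\leq\|[D,\pi(f_+,f_-)]\|$; the second inequality is handled identically via \eqref{eq:lambdad}. The only cosmetic difference is that the paper extracts $\|2f_+\|_\infty\leq\|[D,\pi(f_+,f_-)]\|$ by restricting to vectors supported in a single coordinate (i.e.\ looking at a full column of the commutator), whereas you invoke the block inequality $\|P_iBP_j\|\leq\|B\|$ on a single matrix entry---these are the same estimate.
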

\begin{proof}
Recall that $\A=\A_1\otimes\A_2$, $\HH=\HH_1\otimes\HH_2$ and $D=D_1\otimes1+\gamma_1\otimes D_2$.

We identify $\C^2\otimes\HH_2$ with $L^2(\R)\otimes\C^4$. The representation of
$(f_+,f_-)\in\C^2\otimes C_0(\R)$ is
$\pi(f_+,f_-)=\mathrm{diag}(f_+,0,f_-,0)$,
and the Dirac operator is
$$
D=\mat{D_2 & \lambda^{-1}I_2 \\ \lambda^{-1}I_2 & -D_2}
=\left[\begin{array}{cccc}
\D & 2 & \;\;\lambda^{-1}\!\! & 0 \\
2 & \!\!\!-\D\; & 0 & \;\;\lambda^{-1}\!\! \\
\;\;\lambda^{-1}\!\! & 0 & \!\!\!-\D\; & \!\!\!-2\; \\
0 & \;\;\lambda^{-1}\!\! & \!\!\!-2\; & \D
\end{array}\right] \;,
$$
where $I_2$ is the $2\times 2$ identity matrix.
We have
$$
[D,\pi(f_+,f_-)]
=\left[\begin{array}{cccc}
if_+ ' & -2f_+ & -\lambda^{-1}(f_+-f_-) & 0 \\
2f_+ & 0 & 0 & 0 \\
\lambda^{-1}(f_+-f_-) & 0 & -if_-' & 2f_- \\
0 & 0 & -2f_- & 0
\end{array}\right] \;.
$$
By taking the sup over vectors with only the second component different from
zero we prove that $\|2f_+ \|_\infty\leq\|[D,\pi(f_+,f_-)]\|$.
Similarly $\|2f_- \|_\infty\leq\|[D,\pi(f_+,f_-)]\|$.
But
$$
(\varphi_+\otimes\delta_x)(f_+,f_-)
-(\varphi_-\otimes\delta_y)(f_+,f_-)
=f_+(x)-f_-(y)\leq\|f_+ \|_\infty+\|f_- \|_\infty
\leq\|[D,\pi(f_+,f_-)]\| \;,
$$
and this proves that
$d_{\A,D}(\varphi_+\otimes\delta_x,\varphi_-\otimes\delta_y)\leq 1$.
This proves the first inequality in Prop.~\ref{prop20}.
The other one follows from the simple observation that
$$
d_{\A_1,D_1}(\varphi_+,\varphi_-)^2+d_{\A_2,D_2}(\delta_x,\delta_y)^2
\geq d_{\A_1,D_1}(\varphi_+,\varphi_-)^2=\lambda^2 \;,
$$
last equality being \eqref{eq:lambdad}.
\end{proof}

%%% ======================================================================

\section{Conclusion}
As often advertised by Connes, the ``line element'' in noncommutative
geometry has to be thought as the inverse of the Dirac operator,
$$
\text{`` }\,\de s\sim D^{-1}\text{ ''}.
$$
For a product of two spectral triples $X_1=(\A_1,\HH_1,D_1)$ and $X_2=(\A_2,\HH_2,D_2)$,
noticing that
\begin{equation}
D^2 = (D_1 \otimes \Gamma_2 + 1\otimes D_1)^2 = D_1^2 \otimes 1 + 1
\otimes D_2^2,
\label{eq:19}
\end{equation}
this yields a ``inverse Pythagoras equality'' \cite{Conblog}
\begin{equation}\label{eq:desminustwo}
\frac{1}{\de s^2}=\frac{1}{\de s_1^2}+\frac{1}{\de s_2^2} \;.
\end{equation}
In \cite{Martinetti:2009kx}, we discussed why it was possible to invert
\eqref{eq:desminustwo} in case of the product of a manifold by $\C^2$, so that to retrieve Pythagoras
theorem.
The main result of this paper is to show that for the product of arbitrary (unital) 
spectral triples, \eqref{eq:desminustwo} can be ``integrated'' and leads
to the inequalities \eqref{eq:Pineq} for the spectral distance. More precisely, if
$\varphi:=\varphi_1\otimes \varphi_2$ and $\varphi'=\varphi_1'\otimes \varphi_2'$ are arbitrary
separable states on $\A=\A_1\otimes\A_2$ (not necessarily pure), and
$d$ (resp.~$d_i$) is the spectral distance of $X$ (resp.~$X_i$) then
the inequalitied \eqref{eq:Pineq} hold:
$$
\sqrt{d_1(\varphi_1,\varphi'_1)^2+d_2(\varphi_2,\varphi'_2)^2}\leq
d(\varphi,\varphi') \leq
\sqrt{2}\sqrt{d_1(\varphi_1,\varphi'_1)^2+d_2(\varphi_2,\varphi'_2)^2} \;.
$$
These inequalities already appeared in \cite[Prop.~II.4]{MT11}, where one of the two spaces
was assumed to be the two-point space $\C^2$ and only pure states were considered. Here we proved them in full generality: \eqref{eq:ineqSDb} holds for arbitrary spectral triples and \eqref{eq:ineqSDa} holds if the spectral
triples $X_1$ and $X_2$ are both unital.

In \S4 we provide two elementary (commutative) examples, where one of the two
spectral triples is non-unital, and we show that the inequality \eqref{eq:ineqSDa} is violated even by pure states.

In the case of the Wasserstein distance, we argued that Pythagoras equality is a pro\-perty of pure states, and
it does not hold if we consider non-pure states. Besides the product of two manifolds, it is
not clear whether the purity of states is an essential conditions to
retrieve Pythagoras theorem: for the product of a manifold by
$\C^2$, the distance between two separable non-pure states $\varphi =
\varphi_1\otimes\varphi_2$, $\varphi' =
\varphi'_1\otimes\varphi'_2$ is still
unknown, except when $\varphi_i = \varphi'_i$ for either $i=1$ or
$i=2$. Then Pythagoras equality
is trivially satisfied.  For the  product of the Moyal plane by
$\C^2$, the purity of the states does not seem to be a relevant
criterion: Pythagoras theorem holds true for translated states, pure or not, and
we do not know whether it holds for arbitrary pure states. In any
case, it would be interesting to find a nice class of noncommutative spectral triples
where Pythagoras equality can be proved for pure states.

%%% ======================================================================

\end{document}